\let\oldproof\proof
\let\oldendproof\endproof
\let\proof\relax
\let\endproof\relax
\let\proof\oldproof
\let\endproof\oldendproof
\tikzset{place/.style={circle, draw}}
\tikzset{>=stealth, auto, node distance=2.5cm, every loop/.style={->, min distance=10mm, in=0, out=60, looseness=10}}
\definecolor{darkred}{rgb}{0.5,0.0,0.1}
\title[Modular Answer Set Programming as a Formal Specification Language]{Modular Answer Set Programming as a \\ Formal Specification Language}
\author[Pedro Cabalar, Jorge Fandinno and Yuliya Lierler]
{%
PEDRO CABALAR\\
University of Corunna, Spain\\
\email{cabalar@udc.es}
\and
JORGE FANDINNO
\\
University of Potsdam, Germany\\
\email{fandinno@uni-potsdam.de}
\and
YULIYA LIERLER
\\
University of Nebraska Omaha, USA\\
\email{ylierler@unomaha.edu}%
}
\newtheorem{theorem}{Theorem}
\newtheorem{proposition}{Proposition}
\def\sqht{\hbox{\bf SQHT$^=$}\xspace}
\def\SM{\hbox{\rm SM}}
\def\body{\mathit{Body}}
\def\mp{{\Pi}}
\newcommand{\MP}[1]{\ensuremath{\mp_{#1}}}
\newcommand{\M}[1]{\ensuremath{M_{#1}}}
\newcommand{\stat}[1]{S$_{#1}$}
\def\mh{\ensuremath{\mp_{hc}}}
\def\fh{\ensuremath{\mathbb{H}}}
\def\fh{P_1}
\def\rar{\rightarrow}
\def\lrar{\leftrightarrow}
\def\beq{\begin{equation}}
\def\eeq#1{\label{#1}\end{equation}}
\def\ba{\begin{array}}
\def\ea{\end{array}}
\def\hc{{Hamiltonian cycle}\xspace}
\def\hcs{{Hamiltonian cycles}\xspace}
\def\definition{\mbox{def-module}\xspace}
\def\definitions{{\definition}s\xspace}
\def\modules{\mathit{defmods}}
\def\flattable{flattable\xspace}
\newcommand{\flatt}[1]{\mathit{flat}(#1)}
\def\ANF{$\alpha$-NF}
\def\intens{\mathit{int}}
\def\cP{\mathcal{P}}
\def\cI{\mathcal{I}}
\def\cJ{\mathcal{J}}
\def\cS{\mathcal{S}}
\def\cM{\mathcal{M}}
\newcommand\defm[2]{(#1:#2)}
\newcommand{\eqdef}{%
  \mathrel{\vbox{\offinterlineskip\ialign{%
    \hfil##\hfil\cr%
    $\scriptscriptstyle\mathrm{def}$\cr%
    \noalign{\kern1pt}%
    $=$\cr%
    \noalign{\kern-0.1pt}%
}}}}
\def\pred{\mathit{pred}}
\def\formula{\mathcal{F}}
\newtheorem{prop}{Proposition}
\newtheorem{lemma}{Lemma}
\newcommand\restr[2]{#1_{|#2}}
\renewcommand\vec[1]{{\bf #1}}
\newcommand{\set}[1]{\ensuremath{\{#1\}}}
\newcommand{\setm}[2]{\ensuremath{\{\ #1\ |\ #2\ \}}}
\newcommand{\tuple}[1]{\ensuremath{\langle #1 \rangle}}
\newcommand{\pedro}[1]{\colorlet{saved}{.}\color{blue}{\it\small(P: #1)}\color{saved}\xspace}
\newcommand{\citep}{\cite}
\newcommand{\citet}[1]{\citeauthor{#1}~[\citeyear{#1}]}
\let\proof\relax
\let\endproof\relax
\newenvironment{proof}[1][\emph{Proof}]{\par
  \pushQED{\qed}%
  \normalfont \topsep6\p@\@plus6\p@\relax
  \trivlist
  \item[\hskip\labelsep
    #1\@addpunct{.}]\ignorespaces
}{%
  \popQED\endtrivlist\@endpefalse
}
\newenvironment{proofs}[1][\emph{Proof sketch}]{\par
  \pushQED{\qed}%
  \normalfont \topsep6\p@\@plus6\p@\relax
  \trivlist
  \item[\hskip\labelsep
    #1\@addpunct{.}]\ignorespaces
}{%
  \popQED\endtrivlist\@endpefalse
}
\def\vertex{{\it vertex}}
\def\edge{{\it edge}}
\def\inp{{\it in}}
\def\rp{{\it r}}
\def\rap{{\it ra}}
\def\COMP{\hbox{\rm COMP}}
\def\CIRC{\hbox{\rm CIRC}}
\begin{document}

\clearpage

\maketitle

\begin{abstract}
In this paper, we study the problem of formal verification for Answer Set Programming (ASP), namely, obtaining a \emph{formal proof} showing that the answer sets of a given (non-ground) logic program $P$ correctly correspond to the solutions to the problem encoded by $P$, regardless of the problem instance. To this aim, we use a formal specification language based on ASP modules, so that each module can be proved to capture some informal aspect of the problem in an isolated way. This specification language relies on a novel definition of (possibly nested, first order) \emph{program modules} that may incorporate local hidden atoms at different levels. Then, \emph{verifying} the logic program~$P$ amounts to prove some kind of equivalence between~$P$ and its modular specification.
% In addition to rules, these modules contain  public predicate declarations ({\tt \#show} clauses) as first class citizens.
\end{abstract}
\begin{keywords}
Answer Set Programming, Formal Specification, Formal Verification, Modular Logic Programs.
\end{keywords}
%!TEX root = verification-modular-show.tex

\begin{comment}
Yulia's Notes to address in next revision:

Section 2 last paragraph:
*  Capturing auxiliary predicates as existential quantifiers also formalizes the property that allows us to rename a hidden predicate by a new fresh predicate symbol.

Section 3 first paragraph
* If we leave some of them as extensional, no assumption is made on their extent.

Section 4 the beginning of the second paragraph
*if all occurrences of a predicate name in Φ(P) are free or both they are all bound to a unique occurrence of existential quantifier.
*they are strongly equivalent under the empty context ∅. 

Is top the same as empty context?

\end{comment}

%%%%%%%%%%%%%%%%%%%%%%%%%%%%%%%%%%%%%%%%%%%%%%%%%%%%%%%%%%%%%%%%%%%%%%%%%%%%%%%%%%%%%%%%%%%%%%%%
\section{Introduction}
%%%%%%%%%%%%%%%%%%%%%%%%%%%%%%%%%%%%%%%%%%%%%%%%%%%%%%%%%%%%%%%%%%%%%%%%%%%%%%%%%%%%%%%%%%%%%%%%

Achieving trustworthy AI systems requires, among other qualities,
the assessment that those systems produce correct judgments%
\footnote{\emph{Ethics Guidelines For Trustworthy AI}, High-level Expert Group on Artificial Intelligence set up by the European Commission.
%See also \emph{The National Artificial Intelligence Research And Development
%strategic Plan: 2019 Update}, Select Committee On Artificial Intelligence of the National Science \& Technology Council.
\\
\url{https://ec.europa.eu/digital-single-market/en/news/ethics-guidelines-trustworthy-ai}.
}
or, in other words, the ability to  verify that produced results adhere to specifications on expected solutions. 
These specifications may have the form of expressions in some formal language or may  amount to statements in natural language (consider English used in mathematical texts).
Under this trust-oriented perspective, AI systems built upon some Knowledge Representation (KR) paradigm start from an advantageous position, since their behavior is  captured by some declarative machine-interpretable formal language.
Moreover, depending on its degree of declarativity, a KR formalism can also be seen as a specification language by itself.
%
%However, sometimes modern KR languages use  elaborate constructs 
%whose correspondence with natural language specifications is not always straightforward.

Answer Set Programming~(ASP;~\citeNP{niemela99a,martru99a})
is a well-established KR paradigm for solving knowledge-intensive search/optimization problems.
Based on logic programming under the \emph{answer set semantics}~\cite{gellif88b},
%,gellif91a}, 
the ASP methodology relies on devising a logic program so that its answer sets are in one-to-one correspondence to the solutions of the target problem.
This approach is fully declarative, since the logic program only describes a problem and conditions on its solutions, but abstracts out the way to obtain them, delegated to systems called answer set solvers.
%a call to some ASP solver~\cite{liemar04a,dlv03a,geb07}.
%
%ASP is a specification language itself, but when is an encoding correct? does it correspond to an informal meaning? do different encodings represent the same problem? we normally use different constructs and/or auxiliary predicates
%
Thus, it would seem natural to consider
an ASP program to serve a role of a formal specification on its expected solutions.
%
%An ideal ASP encoding of a problem for this purpose is the one closest to its informal specification.
%
However, %despite its declarative nature, 
the non-monotonicity of the ASP semantics makes it difficult to directly associate an independent meaning to an arbitrary program fragment (as we customary do, for instance, with theories in classical logic).
This complicates assessing that a given logic program reflects, in fact, its intended \emph{informal description}.
And yet, ASP practitioners do build logic programs in groups of rules and identify each group with some part of the informal specification~\cite{erdo04,lifschitz17a}.
Moreover, modifications on a program frequently take place within a group of rules rather than in the program as a whole.
The safety of these local modifications normally relies on such properties in ASP  as \emph{splitting}~\cite{liftur94a,feleli11a}.
% or \emph{(ground) modules}~\cite{oikjan06a}.
%--- that can be applied under certain syntactic restrictions.
%

%To overcome this limitation,
With these observations at hand, 
we propose a \emph{verification methodology for logic programs}, where the argument of correctness of a program is decomposed into respective statements of its parts, relying to this aim on a modular view of ASP in spirit of~\cite{OikarinenJ09,har16}.
In this methodology, a \emph{formal specification}~$\Pi$ is formed by a set of modules called \emph{modular program}, so that each module (a set of program rules) is {\em ideally small enough to be naturally related to its informal description} in isolation.
This relation can be established using quasi-formal English statements~\cite{den12,lifschitz17a} or relying on classical logic~\cite{lifschitz18a}.
The same specification $\Pi$ may serve to describe different (non-modular) ASP programs encoding the same problem. Each such program $P$ respects given priorities involving efficiency, readability, flexibility, etc.
As usual in \emph{Formal Methods}~\cite{mon03}, \emph{verification} then consists in obtaining a \emph{formal proof} of the correspondence between the verified object (in our case, the non-modular encoding $P$) and its specification (the set $\Pi$ of modules matching the informal aspects of that problem).
%
% The example presented in~\cite{cab20}, for instance, proved the correctness of a non-ground ASP program $P$ encoding the Hamiltonian Cycle problem with respect to a specification $\Pi$ consisting of modules as defined in~\cite{har16}.
%
It is important to note that the formal specification language used is subject to two important requirements: (i) dealing with  {\em non-ground} programs; and (ii) capturing stable models of these programs using an expression that can be formally manipulated.
For (i), we could  use \emph{Quantified Equilibrium Logic}~\cite{PearceV08} but for (ii) the equivalent formulation in~\cite{ferraris11a} is more suitable, as it captures stable models of a program as a second-order logic formula,%
\!\footnote{The need for second-order logic is not surprising:  the stable models of a logic program allow us to capture a transitive closure relation.} 
the SM operator.

% The approach in~\cite{cab20}, however, showed some important limitations for \emph{reusing} the obtained proof in a modular way.
%
Once a modular specification is guaranteed (related to its informal description/proved to be correct with respect to its informal description), we expect that 
 arguing the correctness of the replacement of some of its module by a different encoding is reduced to arguing some kind of equivalence between modules without affecting the rest of the correctness proof.
%
%Similarly,  the  formal specification $\Pi$ of some problem may capture a module for a larger program and, again, one expects that the correctness proof of $\Pi$ should be reusable in some way.
%
%For example, the formal specification of well-known \emph{Hamiltonian Cycle (HC)} problem may become a module within the formal specification of the \emph{Traveling Salesman} problem, where a new numeric restriction comes into play.
%
The difficulty here appears when \emph{auxiliary predicates} are involved.
These predicates are quite common to improve the solver performance in a given encoding $P$ but, more importantly, they are sometimes indispensable in the specification $\Pi$ to express some property that the ASP language cannot directly capture otherwise~\cite{goncalves16a}. 
In both cases, their presence must be taken into account when proving correctness which, in its turn, normally depends on their \emph{local use in some part of the program}.

In this paper, we extend the modular language from~\cite{har16} to allow for  a \emph{hierarchical tree} of modules and submodules, each of them possibly declaring its own set of {\em public} and {\em hidden} predicates at different levels.
%
% This is an important implicit assumption handled during the development cycle that is not formally stated in the program code.
%Our specification language allows declaring public predicates in a module using {\tt \#show} statements (that receive, in this way, a formal treatment), being the rest of predicates considered auxiliary.
%
We use this extension as a language for formal specifications. For illustration, we consider a logic program encoding the well known Hamiltonian Cycle (HC) problem.
We start by providing a formal specification of  the HC problem using the hierarchy of modules and relate it to the informal description of the problem. We then formally prove the correspondence between the HC logic program  and its hierarchical specification.
This constitutes the argument of correctness for the considered logic program.
We also provide an example of module replacement that is verified through an equivalence proof that disregards the existing auxiliary predicates.

\noindent
{\em Paper outline:} Section~\ref{sec:motivation} provides a running example of the HC problem encoding and presents our methodology. % used for illustration and motivation purposes. 
 In Section~\ref{sec:sm}, we revisit the SM operator and extend it with hidden predicates.
% %
 Section~\ref{sec:modlog} presents our modular logic programs, while 
 Sections~\ref{sec:specification} and~\ref{sec:verif} explain their use for formal verification, illustrating the proposed methodology on the running example.
% Section~\ref{sec:eqp} studies formal properties of modular programs related to equivalent substitutions of its subparts.
% %
% Section~\ref{sec:conc} concludes the paper.
\begin{figure}[t]
\begin{subfigure}{.3\textwidth}
\centering
\begin{tikzpicture}[scale=1, node distance=2cm and 0.1cm, every node/.style={scale=1}]
\node[place] (a) {{\it a}};
\node[place] (b)  [right of=a] {{\it b}};
\node[place] (c) [below  of=b] {{\it c}};
\node[place] (d) [below  of=a] {{\it d}};
% (r) edge [loop above] node {} ()
\draw [->,double distance=2pt] (a) to  (b);
\draw [->,double distance=2pt] (b) to  (c);
\draw [->,double distance=2pt] (d) to  (a);
\draw [->,double distance=2pt] (c) to [out=155,in=25] (d);
\draw [->] (d) to [out=-25,in=205] (c);
\end{tikzpicture}
\caption{Graph $G_1$}
\label{fig:example_g1}
\end{subfigure}
\begin{subfigure}{.5\textwidth}
\centering
\begin{tikzpicture}[shorten >=2pt, node distance=1.6cm and 0.1cm, auto,scale=1, every node/.style={scale=1}]
\node[place] (e) {{\it edge}};
\node[place] (v)  [right of=e] {{\it vertex}};
\node[place] (i) [below  of=v] {$\;\;${\it in}$\;\;$ };
\node[place] (r) [right  of=i] {$\;\;\;${\it r}$\;\;$ };
\path[->]
 (r) edge [loop above] node {} ()
 (v) edge[->] node[above] {} (e)
 (i) edge[->] node[below] {} (e)
 (r) edge[->] node[above] {} (i);
\end{tikzpicture}
\caption{Dependency graph for $\mh^m$}
\label{fig:example_dep}
\end{subfigure}
\caption{A pair of graphs used in the examples.}
\label{fig:example}
\end{figure}
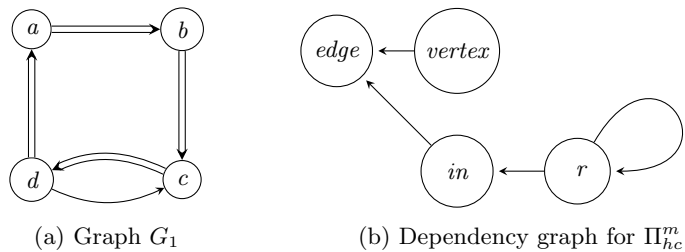
%%%%%%%%%%%%%%%%%%%%%%%%%%%%%%%%%%%%%%%%%%%%%%%%%%%%%%%%%%%%%%%%%%%%%%%%%%%%%%%%%%%%%%%%%%%%
\section{Motivating example and methodology}\label{sec:motivation}
%%%%%%%%%%%%%%%%%%%%%%%%%%%%%%%%%%%%%%%%%%%%%%%%%%%%%%%%%%%%%%%%%%%%%%%%%%%%%%%%%%%%%%%%%%%%

We consider %illustrate the key ideas behind our methodology using 
a well-known domain in the ASP literature: the search of Hamiltonian cycles in a graph.
A \hc is a cyclic path from a directed graph that visits each of the graph's vertex exactly once.
For instance, Figure~\ref{fig:example_g1} depicts graph~$G_1$, whose unique Hamiltonian cycle is marked in double lines, whereas Listing~\ref{list:hamiltonian} presents a possible encoding of the problem in the language of ASP solver {\sc clingo}~\citep{geb07}.
% This program is annotated with \emph{achievements}
% following the methodology introduced by~\cite{DBLP:journals/tplp/Lifschitz17}, where he assumes that {\em 
%  ``the programmer would include a comment that states what has been {\em achieved} so far, in a
% certain precise sense.''}
% The properties stated by these achievements are expected to be maintained until the program is
% completed.
The {\tt \#show} directive is used to tell {\sc clingo} which predicates (we call these \emph{public}) should appear in the obtained answer sets: in this case, only predicate {\tt in/2} that captures the edges in the solution. 
Now, if we add the facts in Listing~\ref{list:graph1} corresponding to graph $G_1$ and instruct {\sc clingo} to find all the answer sets we obtain the output in Listing~\ref{list:answer}, which is the only Hamiltonian cycle in the graph.
ASP practitioners usually explain Listing~\ref{list:hamiltonian} in groups of rules.
Rule~3 is used to generate all possible subsets of edges, while Rules~\mbox{7-8} guarantee that connections among them are linear.
Rules~\mbox{4-5} are meant to define the auxiliary predicate~$\mathit{r}$ (for ``reachable'') as the transitive closure of predicate~$\mathit{in}$.
To assign this meaning to~$r$, we \emph{implicitly assume} that this predicate does not occur  in other rule heads in the rest of the program.
Predicate $r$ is then used in  Rule~6 to enforce that any pair of vertices are connected.
Rules~\mbox{4-6} together guarantee that facts for $in$ form a strongly connected graph covering all vertices in the given graph.
\begin{figure}[t]
\lstinputlisting[
  caption = {Encoding of a Hamiltonian cycle problem using {\sc clingo}. 
%   The program is annotated with \emph{achievements}.
},
  label={list:hamiltonian},
  basicstyle=\ttfamily\small,
  numbers=left,
  stepnumber=2,
]{lp/hamiltonian.tex}
%
% \vspace{-8pt}
\lstinputlisting[
  caption = {Facts describing graph $G_1$.
%   The program is annotated with \emph{achievements}.
  },
  label={list:graph1},
  basicstyle=\ttfamily\small
]{lp/graph1.tex}
%
% \vspace{-8pt}
\lstinputlisting[
caption = {Output by {\sc clingo} for program composed of lines in Listing~\ref{list:hamiltonian} and~\ref{list:graph1}.},
label={list:answer},
  basicstyle=\ttfamily\small
]{lp/answer1.tex}
%
% \vspace{-8pt}
\lstinputlisting[
  caption = {Alternative code to lines~\mbox{4-6} in Listing~\ref{list:hamiltonian}~\protect\cite{martru99a}.
%   The program is annotated with \emph{achievements}.
},
  label={list:hamiltonian2},
  basicstyle=\ttfamily\small,
  stepnumber=2,
]{lp/hamiltonian2.tex}
\end{figure}
\paragraph{Methodology.}
The methodology we propose for verifying the correctness of some logic program under  answer set semantics consists of the following steps:
\begin{enumerate}[label=Step~\Roman*., ref=Step~\Roman*, labelwidth=38pt, align=left, leftmargin=45pt]
\item Decompose the informal description of the problem into independent (natural language) statements $S_i$, identifying their possible hierarchical organization.
    \label{item:methodology.1}

\item Fix the public predicates used to represent the problem and its solutions.
\label{item:methodology.2a}

\item Formalize the specification of the statements as a non-ground modular program~$\Pi$, possibly introducing (modularly local) auxiliary predicates.
    \label{item:methodology.2b}

\item Construct an \emph{argument} (a ``metaproof'' in natural language) for the correspondence between $\Pi$ and the informal description of the problem.
    \label{item:methodology.3}
\item Verify that the given logic $P$ program adheres to the formal specification $\Pi$. The result of this step is a set of formal proofs.
    \label{item:methodology.4}
\end{enumerate}
Note that the first four steps are exclusively related to the formal specification $\Pi$ of the problem, while the particular program $P$ to be verified is only considered in~\ref{item:methodology.4}, where formal verification proofs are produced.
%It is due to note that all but the last step of the methodology are agnostic to the fact that we are interested in verifying the correctness of a logic program under answer set semantics.

Now, back to Hamiltonian cycle problem, a possible and reasonable result of~\ref{item:methodology.1} is the hierarchy of statements:
\begin{enumerate}
\item A Hamiltonian cycle~$G'$ of graph~$G$ must be a subgraph of~$G$ that contains all vertices of~$G$, that is:
\label{herarchy:1}
    \begin{enumerate}[topsep=1pt]
    \item $G'$ has the same vertices as~$G$, and
    \label{herarchy:1a}
    \item all edges of~$G'$ also belong to~$G$.
    \label{herarchy:1b}
    \end{enumerate}
\item A Hamiltonian cycle~$G'$ of graph~$G$ is a cycle that visits all vertices of~$G$ exactly once, that is:
\label{herarchy:2}
    \begin{enumerate}[topsep=1pt]
    \item no vertex has more than one outgoing/incoming edge on~$G'$, and
    \label{herarchy:2a}
    \item $G'$ is strongly connected.
    \label{herarchy:2b}
    \end{enumerate}
\end{enumerate}
%This hierarchy of statements corresponds to~\ref{item:methodology.1} of the stated methodology.
%
%\sidecomment{add \ref{item:methodology.2a}}
The choice for public predicates (\ref{item:methodology.2a}) is, of course, arbitrary, but must be decided to compare different encodings (as also happens, for instance, when we fix a benchmark).
Here, we choose predicates $edge/2$ and $in/2$ to encode the edges of the input graph $G$ and the Hamiltonian cycle $G'$, respectively. 
To prove the correctness of the encoding in Listing~\ref{list:hamiltonian}, we resume the rest of our methodological steps later on, in Sections~\ref{sec:specification} and~\ref{sec:verif}.
For instance, \ref{item:methodology.2b} is shown in Section~\ref{sec:specification}, where we define a formal specification~$\MP{1}$ that happens to comprise the same rules as Listing~\ref{list:hamiltonian} but for the one in line~8.
The main difference is that rules in $\MP{1}$ are grouped in modules corresponding to the above hierarchy.
A set of propositions in Section~\ref{sec:specification} are used to establish the correspondence between $\MP{1}$ (\ref{item:methodology.3}) and the informal statements. 
The already mentioned strong relation between Listing~\ref{list:hamiltonian} and $\MP{1}$ is not something we can always expect.
As happens with \emph{refactoring} in software engineering, encodings usually suffer a sequence of modifications to improve some of their attributes (normally, a better efficiency) without changing their functionality.
Each new version implies a better performance of the answer set solver, but its correspondence with the original problem description becomes more and more obscure~\cite{bud15}.
For instance, it might be noted that program in Listing~\ref{list:hamiltonian} produces an excessively large ground program when utilized on graphs of non trivial size.
It turns out that it is enough to require that all vertices are reachable from some fixed node in the graph (for instance~$a$).
% , provided that rules such as~\mbox{7} and~\mbox{8} restrict all paths to be linear.
%
Thus, rules~\mbox{4-6} of Listing~\ref{list:hamiltonian} are usually replaced by rules in Listing~\ref{list:hamiltonian2}. The answer sets with respect to predicate $in$ are identical, if the graph contains a vertex named~$a$. 
In this sense, \emph{verifying} an ASP program can mean establishing some kind of \emph{equivalence} result between its formal specification in the form of a modular program and the final program obtained from the refactoring process~\cite{lier19}.
This is tackled in Section~\ref{sec:verif} and constitutes~\ref{item:methodology.4}.
\section{Operator SM with hidden predicates}\label{sec:sm}
%%%%%%%%%%%%%%%%%%%%%%%%%%%%%%%%%%%%%%%%%%%%%%%%%%%%%%%%%%%%%%%%%%%%%%%%%%%%%%%%%%%%%%%%%%%%%%%%

Answer set semantics has been extended to arbitrary first-order~(FO) theories with the introduction of \emph{Quantified Equilibrium Logic}~\cite{PearceV08} and its equivalent formulation using the second-order (SO) operator {SM}~\cite{feleli11a}.
These approaches allow us to treat program rules as logical sentences with a meaning that bypasses grounding.
For instance, rules in Listing~\ref{list:hamiltonian} respectively correspond to:
\begingroup
\allowdisplaybreaks
\begin{align}
% &{\it edge}(a, a')\wedge \;\; \dots \;\; \wedge {\it edge}(c, c') \; \\
&\forall x y({\it edge}(x,y) \rar {\it vertex}(x)) \; \label{eq:hce1}\\
&\forall x y({\it edge}(y,x) \rar {\it vertex}(x)) \; \label{eq:hce2}\\
&\forall x y( \neg\neg  {\it in}(x,y) \wedge {\it edge}(x,y) \rar {\it 
in}(x,y)) \; \label{eq:choicerulefo}\\
&\forall x y({\it in}(x,y) \rar {\it r}(x, y)) \; \label{eq:hcr1}\\
&\forall x y z({\it r}(x,z) \wedge {\it r}(z,y)\rar {\it r}(x,y)) \; \label{eq:hcr2}
\\
&\forall x y (\neg {\it r}(x, y)\wedge {\it vertex}(x)\wedge {\it 
vertex}(y)\rar \bot) \label{eq:hcmod4}\\
&\forall x y z({\it in}(x,y)\wedge {\it in}(x,z)\wedge \neg(y=z) \rar\bot) 
\; \label{eq:hcmod2.1}\\
&\forall x y z({\it in}(x,z)\wedge {\it in}(y,z)\wedge \neg(x=y) \rar\bot) 
\; \label{eq:hcmod2.2} 
\end{align}
\endgroup
As we can see, the correspondence is straightforward except, perhaps, for the \emph{choice} rule in line~3
%\begin{verbatim}
% {\tt \{ in(X,Y) \}:- edge(X,Y).}    
%\end{verbatim}
% \noindent
of Listing~\ref{list:hamiltonian} that is represented as formula~\eqref{eq:choicerulefo} (see~\citeNP{fer05} for more details).
% showed that a choice  rule can be seen as an abbreviation for a formula that uses doubly negated atoms as illustrated by our example.
We name the conjunction of sentences~\eqref{eq:hce1}-\eqref{eq:hcmod2.2} as our encoding $\fh$.

We now recall the SM operator from~\cite{feleli11a}, assuming some familiarity with second order~(SO) logic.
% \jcomment{(see~\ref{?????} for more details)}.
% We start this section by reviewing the second order logic. We then review the SM operator.
% \input{so.tex}
% \subsection{Operator SM}\label{sec:sm}
% We are now ready to introduce the operator SM by~\cite{feleli11a}.
%
We adopt some notation: a letter in boldface inside a formula denotes a tuple of elements also treated as a set, if all elements are different.
Quantifiers with empty tuples (or empty sets of variables) can be removed: \mbox{$\exists \emptyset F = \forall \emptyset F := F$}.
Expression $pred(F)$ stands for the set of free predicate names (different from equality) in a SO formula $F$.
If $p$ and $q$ are predicate names of the same arity~$m$ then $p \leq q$ is an 
abbreviation of  
$
\forall \vec{x}(p(\vec{x}) \rar q(\vec{x})), 
$
where $\vec{x}$ is an $m$-tuple of \mbox{FO variables}.
Let $\vec{p}$ and $\vec{q}$ be tuples $p_1, \dots, p_n$ and $q_1, \dots, q_n$ of predicate symbols or variables.
Then 
$\vec{p} \leq \vec{q} :=
(p_1 \leq q_1) \land \dots \land (p_n \leq q_n),
$ 
and 
$\vec{p} < \vec{q}$ is an abbreviation of 
$(\vec{p} \leq \vec{q}) \land \neg (\vec{q} \leq \vec{p})$.
Given a FO formula $F$, its {\em stable model operator with intensional predicates $\vec{p}$} (not including equality) is the SO formula 
\begin{equation}
\SM{\vec{p}}[F] \quad := \quad F \land \neg \exists \vec{U}\big( \ (\vec{U} < \vec{p}) \land F^*\ \big), \label{f:sm}
\end{equation}
with $\vec{U} = U_1, \dots, U_n$ distinct predicate variables not occurring in $\vec{p}$ and:
\[
F^* := \left\{
\begin{array}{cl}
F & \text{if $F$ is an atomic formula without members of } \vec{p}\\
U_i(\vec{t}) & \text{if $F$ is $p_i(\vec{t})$ for $p_i \in \vec{p}$ and $\vec{t}$ a tuple of terms}\\
G^* \otimes H^* & \text{if } F = (G \otimes H) \ \text{and } \otimes \in \{\wedge, \vee\}\\
(G^* \to H^*) \wedge (G \to H) & \text{if } F = (G \to H) \\
Qx \; (G^*) & \text{if } F = Qx\; G \ \text{ and } Q \in \{\forall, \exists\}\\
\end{array}
\right.
\]
Predicate symbols occurring in $F$ but not in $\vec{p}$ are called \emph{extensional} and are interpreted classically.
In fact, if $\vec{p}$ is empty, it is easy to see that SM$_\vec{p}[F]$ coincides with $F$.
%
%Operator $\SM_\vec{p}[F]$ is formally similar to circumscription  $\text{CIRC}_\vec{p}[F]$: in fact, the latter amounts to replacing $F^*$ by $F$ in \eqref{f:sm}.
%
We say that an interpretation~$\cI$ over a signature~$\cP$ is an \emph{answer set} of a FO formula~$F$ (representing a logic program) when it is a Herbrand model of
SM$_\vec{p}[F]$ and $\vec{p}=pred(F)$.
When $F$ is a logic program, answer sets defined in this way correspond to the traditional definition based on grounding~\cite{feleli11a}.
It is common to identify Herbrand interpretations with sets of atoms
corresponding to its predicates and their extensions.
For a Herbrand interpretation $\cI$ over set~$\cP$ of predicates  and set $\cS\subseteq\cP$, we write  $\restr{\cI}{\cS}$ to denote the restriction of $\cI$ to $\cS$.
As usual, the \emph{extent} of a predicate $p$ in interpretation~$\cI$, written $p^\cI$, collects every tuple of Herbrand terms $\vec{t}$ for which $p(\vec{t})$ holds in $\cI$.% Here, we often adopt this convention.

As a small example, let $F_g$ be the conjunction of facts in Listing~\ref{list:graph1}, $F_1$ denote a conjunction of  $F_g$ and $\eqref{eq:hce1}$ and let $\vec{U}$ be $\tuple{E,V}$. Then, $F_1^*$ is formed by the conjunction of
$
\forall x y ( (E(x,y) \to V(x)) \wedge (edge(x,y) \to vertex(x)))
$
and all atoms $E(x,y)$, one per each fact $edge(x,y)$ in $F_g$.
The answer sets of $F_1$ are captured by the Herbrand models of $\text{SM}_{\tuple{edge,vertex}}[F_1]$.
%$$
%\text{SM}_{\tuple{edge,vertex}}[F_1] = F_1 \wedge \neg \exists e v \big (\tuple{e,v} < \tuple{edge,vertex} \wedge F_1^* \big)
%$$
This formula has a unique model that minimizes the extension of $edge$ to the exact set of facts in $F_g$ (and not more) and the extension of $vertex$ to be precisely all nodes used as left arguments in those edges.
If we take instead the formula $F_2:=\fh \wedge F_g$
and $\vec{q}=\mathit{pred}(F_2)$ then 
\mbox{SM$_{\vec{q}}[F_2]$}
has a unique Herbrand model that has the same atoms for predicate ${\it in}$ as those in Listing~\ref{list:answer} but, obviously, has also more atoms for the remaining predicates in $\mathit{pred}(F_2)$. %, including the auxiliary atoms of the form $r(\cdot,\cdot)$.
A simple way of removing (or forgetting) those extra predicates in SO is adding their existential quantification.
Given formula $F$ we define the \emph{answer sets of~$F$ for $\vec{p}$ hiding predicates $\vec{h}$} as the Herbrand models of:
\begin{equation}
\exists \vec{H} \ \text{SM}_\vec{p}[F^\vec{h}_\vec{H}] \label{f:hSM}
\end{equation}
where $\vec{H}$ is a tuple of predicate variables of the same length as~$\vec{h}$
and $F^\vec{h}_\vec{H}$ is the result of replacing all occurrences of predicate symbols from~$\vec{h}$ by the corresponding predicate variables from~$\vec{H}$.
We abuse the notation and write $\exists \vec{h} \ \text{SM}_\vec{p}[F]$ instead of~\eqref{f:hSM} when it does not lead to confusion.
We call predicates in tuples $\vec{p}$ and  $\vec{h}$ {\em intensional} and {\em hidden}, respectively.
For instance, 
$\exists \mathit{edge} \, \mathit{vertex} \, \mathit{r} \  \text{SM}_{\vec{q}}[F_2]$
stands for the formula
$\exists E\,V\,R \  \text{SM}_{\vec{q}}[F_2']$
where~$F_2'$ is obtained from~$F_2$ by replacing predicate symbols $\mathit{edge},\, \mathit{vertex},\, \mathit{r}$ by variables~$E,\,V,\,R$; and it has a unique Herbrand model that coincides with the one in Listing~\ref{list:answer}.
Thus, this model forms the unique answer set of $F_2$ for $\vec{q}$ hiding predicates
$\mathit{edge}, \mathit{vertex}$ and $\mathit{r}$.
This corresponds to the behavior produced by the~$\tt \#show$ directive in Line~9 of the \hc encoding.
% Listing~\ref{list:answer1}.
\begin{comment}
At the same time, it is safe to rewrite~\eqref{eq:choicerulefo} as follows 
$$
\forall x y(  {\it edge}(x,y)) \rar {\it 
in}(x,y) \vee \neg  {\it in}(x,y)).
$$
The last formula may provide a better insight to an intuitive meaning of this rule. Yet, for the later presentation it is convenient for us to use the presentation of a choice rule exemplified in~\eqref{eq:choicerulefo}. 
\end{comment}
%
The following proposition states that the existential quantifiers for hidden predicates filter out their information in the models.

\begin{proposition}\label{thm:sm-traditional}
Let $F$ be a formula, $\vec{h}$ a tuple of predicate symbols from $F$ and let
\mbox{$\cS=pred(F) \setminus \vec{h}$}.
Then, $\cI$ is a Herbrand model of \eqref{f:hSM} iff there is some answer set $\cJ$ of~$F$ such that $\cI=\restr{\cJ}{\cS}$.
\end{proposition}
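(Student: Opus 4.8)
The plan is to massage~\eqref{f:hSM} into a second-order existential quantification applied directly to $\text{SM}_{\vec{p}}[F]$ (with $\vec{p}=pred(F)$), and then to read off its Herbrand models by a projection argument. Concretely, I would first show that the substitution commutes with the stable-model operator,
\[
\text{SM}_{\vec{p}}[F^{\vec{h}}_{\vec{H}}] \;=\; \big(\text{SM}_{\vec{p}}[F]\big)^{\vec{h}}_{\vec{H}},
\]
so that~\eqref{f:hSM} is just $\exists \vec{H}\,\big(\text{SM}_{\vec{p}}[F]\big)^{\vec{h}}_{\vec{H}}$, the formula obtained from $\text{SM}_{\vec{p}}[F]$ by existentially quantifying away the hidden predicates. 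Here it is essential that $\vec{h}\subseteq pred(F)=\vec{p}$, i.e.\ the hidden predicates are \emph{intensional}, and that renaming them to the fresh variables $\vec{H}$ preserves this intensional role (keeping $\vec{H}$ among the minimized symbols); otherwise the hidden predicates would be read classically and the statement would fail.

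The commutation identity I would prove by structural induction following the definition of the $F^*$ reduct in~\eqref{f:sm}. Renaming a predicate throughout a formula clearly distributes over $\wedge,\vee,\to$ and the quantifiers, and, since each $p_i\in\vec{h}$ is replaced by a fresh $H_i$ whose reduct atom is the same companion variable $U_i$, one obtains $(F^{\vec{h}}_{\vec{H}})^{*}=(F^{*})^{\vec{h}}_{\vec{H}}$. The only remaining occurrences of $\vec{h}$ in $\text{SM}_{\vec{p}}[F]$ are inside the comparison $\vec{U}<\vec{p}$; renaming the $\vec{h}$-components of $\vec{p}$ there matches the renaming in the body, which yields the displayed identity. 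This bookkeeping of intensionality is the step I expect to require the most care, as it is exactly where the (intended) minimization of the hidden predicates is encoded, and where an extensional reading of $\vec{H}$ would break the equivalence.

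With~\eqref{f:hSM} rewritten as $\exists \vec{H}\,\big(\text{SM}_{\vec{p}}[F]\big)^{\vec{h}}_{\vec{H}}$, the conclusion follows from a general fact about Herbrand models and second-order existential quantification. Let $\Psi=\text{SM}_{\vec{p}}[F]$, whose free predicate symbols lie in $\cS\cup\vec{h}$ and whose Herbrand models are, by definition, exactly the answer sets of $F$. A Herbrand interpretation $\cI$ over $\cS$ satisfies $\exists \vec{H}\,\Psi^{\vec{h}}_{\vec{H}}$ iff there is a choice of extents $\vec{A}$ for $\vec{H}$ over the Herbrand universe with $(\cI,\vec{H}\!\mapsto\!\vec{A})\models \Psi^{\vec{h}}_{\vec{H}}$. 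Since $\Psi^{\vec{h}}_{\vec{H}}$ is $\Psi$ with $\vec{h}$ renamed to $\vec{H}$, this holds iff the Herbrand interpretation $\cJ$ over $\cS\cup\vec{h}$ defined by $\restr{\cJ}{\cS}=\cI$ and $\vec{h}^{\cJ}=\vec{A}$ satisfies $\Psi$. Thus such a witness $\vec{A}$ exists iff there is an answer set $\cJ$ of $F$ with $\restr{\cJ}{\cS}=\cI$, which is precisely the claim; reading the equivalence from left to right and from right to left gives the two directions. The two points I would double-check are that the extents quantified by $\exists\vec{H}$ range over the same relations that $\vec{h}$ may take in a Herbrand interpretation (both range over subsets of powers of the Herbrand universe), and that no predicate of $\vec{h}$ secretly survives as a free symbol of~\eqref{f:hSM}.
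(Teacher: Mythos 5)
Your proof is correct, and there is nothing in the paper to compare it against: Proposition~\ref{thm:sm-traditional} is stated without proof (it is invoked later, e.g.\ for part~(ii) of Theorem~\ref{thm:sm-traditional2.hidden} and in the verification of Listing~\ref{list:hamiltonian}, but never argued), so your argument fills a gap rather than reproducing an existing derivation. The two steps you propose are exactly what such a proof needs. The commutation identity $\SM_{\vec{p}}[F^{\vec{h}}_{\vec{H}}]=(\SM_{\vec{p}}[F])^{\vec{h}}_{\vec{H}}$ is a purely syntactic renaming induction over the reduct clauses of~\eqref{f:sm}, and the point you single out is the right one: \eqref{f:hSM} only makes sense, and the proposition is only true, if the fresh predicate variables $\vec{H}$ inherit the intensional role of the symbols in $\vec{h}$ they replace --- both inside the reduct, where $H_j(\vec{t})$ must keep the same companion variable $U_i$ that $h_j(\vec{t})$ had, and in the comparison $\vec{U}<\vec{p}$; under an extensional (classical) reading of $\vec{H}$ the minimization of the hidden predicates would be lost and the equivalence would fail. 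The second step is the standard second-order substitution/projection lemma: over Herbrand interpretations, the witnesses for $\exists\vec{H}$ range over exactly the sets of tuples of ground terms that a Herbrand interpretation may assign to $\vec{h}$, so $\cI\models\exists\vec{H}\,\Psi^{\vec{h}}_{\vec{H}}$ iff some Herbrand interpretation $\cJ$ over $\cS\cup\vec{h}$ with $\restr{\cJ}{\cS}=\cI$ satisfies $\Psi=\SM_{pred(F)}[F]$, whose Herbrand models are by definition the answer sets of~$F$; reading this chain in both directions yields the proposition.
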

%
%Existential quantification of auxiliary predicates provides a natural way to make them immaterial: they are local to the formula and can be arbitrarily renamed.

%%%%%%%%%%%%%%%%%%%%%%%%%%%%%%%%%%%%%%%%%%%%%%%%%%%%%%%%%%%%%%%%%%%%%%%%%%%%%%%%%%%%%%%%%%%%%%%%
\section{Nested modular programs}\label{sec:modlog}
%%%%%%%%%%%%%%%%%%%%%%%%%%%%%%%%%%%%%%%%%%%%%%%%%%%%%%%%%%%%%%%%%%%%%%%%%%%%%%%%%%%%%%%%%%%%%%%%

As explained in the introduction, our specification language departs from the work on modular logic programs by~\citeN{har16}, where each module is a FO formula whose semantics is captured by the SM operator.
This choice is motivated by two factors.
First, it allows treating programs or modules with variables as FO formulas that can be manipulated without resorting to the program grounding at all. 
This is crucial is we wish to argue about the meaning of each module regardless of the particular instance of the problem to be solved.
Second, the SM operator captures the semantics of the program also as a formula that, although including SO quantifiers, can be formally treated using a manifold of technical results from the literature (including  theorems on splitting or constraints, for instance).
To adapt this modular approach to our purposes in this paper,
we extend it in two ways:
(i)~handling auxiliary predicates inside modules
% through a formal semantics for the directive {\tt \#show}
and (ii) introducing nested modules.

% We proceed to define now our extension of \cite{har16} that incorporates
% public predicates and nested modules.
%
A \emph{\definition} $M$ is a pair \mbox{$(\vec{p}:F)$}, where $\vec{p}$ is a tuple of intensional predicate symbols and $F$ is a \mbox{FO formula}.
Its  semantics  is captured by the formula $\Phi(M) \eqdef \SM_\vec{p}[F]$.
When $\vec{p}$ is empty, we write~$F$ in place of \definition \mbox{$(\vec{p}:F)$} --- indeed,  $\Phi(M)$ amounts to $F$.
On the contrary, when $\vec{p}=pred(F)$ all predicates in $F$ are intensional, so \definition \mbox{$\defm{pred(F)}{F}$} represents the usual situation in a logic program $F$.
% as they have identical answer sets.
%When $F$ is 
%A standard logic program $F$ corresponds to a \definition \mbox{$\defm{pred(F)}{F}$} where all predicate symbols are intensional.
%
If we leave some predicates as extensional, then their extent is not ``minimized.''
For instance, rule $\eqref{eq:hce1}$ alone, i.e., the \definition $\defm{edge,vertex}{\eqref{eq:hce1}}$, has a unique answer set $\emptyset$ whereas in a \definition~$\defm{vertex}{\eqref{eq:hce1}}$
no assumption is made on extensional predicate $edge$, while $vertex$ collects precisely  all left arguments of $edge$.
%
% For simplicity, in the rest of the paper we will just identify each \definition $M$ with its SO-formula~$\Phi(M)$.

A \emph{modular program} is a pair ${\mp = \tuple{\cS,\cM}}$, where $\cS \subseteq \cP$ is the set of \emph{public} predicate symbols and $\cM$ is the set of \emph{modules} $\{M_1,\dots,M_n\}$, where $M_i$ ($1\leq i\leq n$) is either a \emph{\definition} or another modular program (called  \emph{subprogram} of $\mp$), so that all predicate symbols occurring in $\mp$ are in $\cP$.
A modular program can be depicted as a hierarchical tree, whose leafs are \definitions.
The interpretation of a modular program $\mp=\tuple{\cS, \cM}$ is captured by the (recursively defined) formula:
\[
\Phi(\mp)
	\ \ \eqdef \ \ \exists \vec{h} \
\bigwedge \left\{\Phi(M_i) \mid M_i \in \cM\right\}
\]
where $\vec{h}$ contains all free predicate symbols in $\Phi(M_i)$ that are not in $\cS$.
Notice that, if~$M_i$ is a \definition $\defm{\vec{p}}{F}$, then $\Phi(M_i)=\SM_\vec{p}[F]$ as we defined before, while if $M_i$ is a subprogram, we apply the formula above recursively.
%It is easy to see that if~$M_i$ is a \definition then all of its predicate symbols are public.
%
We say that an interpretation~$\cI$ is a \emph{model} of a modular program~$\mp$,
when it satisfies the formula~$\Phi(\mp)$.
As in the previous section,
we say that interpretation~$\cI$ over public predicate symbols $\cS$  is an \emph{answer set} of a modular program~$\mp$ when it is a Herbrand model of~$\mp$.
Programs $\mp$ and $\mp'$ are said to be \emph{equivalent} if they have the same answer sets.
Under logic programming syntax (like Listing~\ref{list:hamiltonian}), public predicates are declared via {\tt \#show} clauses. % inside the module.
We sometimes allow the \definition $M=(\vec{p}:F)$ as an abbreviation of the modular program $\tuple{pred(F), \set{M}}$.

We define $\modules(P)$ as the set of all \definitions in $\mp$ at any level in the hierarchy.
\mbox{Program~$\mp=\tuple{\cS,\cM}$} is said to be \emph{flat} when it does not contain subprograms, i.e., $\cM$ coincides with $\modules(P)$.
We call flat modular program $\tuple{\cS,\cM}$ \emph{\mbox{HL-modular}}, when 
its set~$\cS$ of public predicates contains all predicate symbols occurring in $\cM$. HL-modular programs capture the definition of modular programs from~\citep{har16}.

To illustrate these definitions, we provide a modular program $\MP{1}$ that will act later on as a specification for encoding~$\fh$=~\eqref{eq:hce1}-\eqref{eq:hcmod2.2} from Section~\ref{sec:sm}. % of the \hc derived from Listing~\ref{list:hamiltonian}.
% without {\tt \#show} statement. We construct an HL-modular program for this purpose.
Program $\MP{1}$ comprises the same rules~\eqref{eq:hce1}-\eqref{eq:hcmod2.1} (as we will see, \eqref{eq:hcmod2.2} is actually redundant) but organizes them in the  tree of Figure~\ref{fig:tree}.
The tree contains 5 \definitions (the leaf nodes) and has three subprograms, $\MP{sb}$, $\MP{hc}$ and $\MP{cn}$ at different levels.
Each modular program node (drawn as a thick line box) also shows inside its set of public predicates.
Note, for instance, how predicate $r$ is local to subprogram $\MP{cn}= \tuple{\{vertex,in\}, \{\big({\it r} : \eqref{eq:hcr1} \wedge \eqref{eq:hcr2} \big), \eqref{eq:hcmod4}\}}$ that corresponds to rules~4-6 in Listing~\ref{list:hamiltonian} and intuitively states that relation~$in$ forms a strongly connected graph covering all vertices.
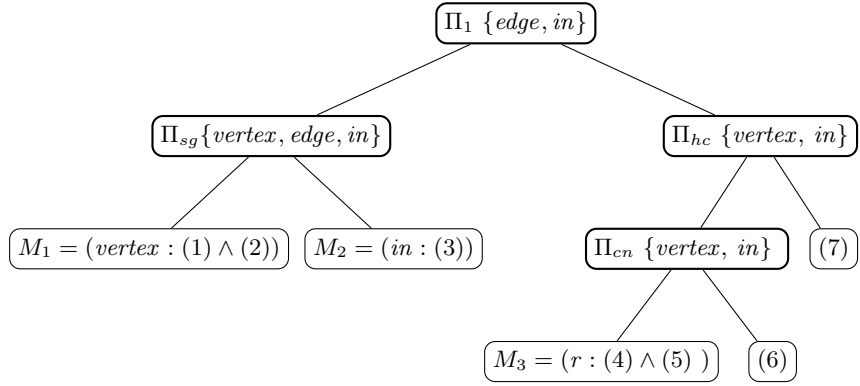
\begin{figure}
\centering
\begin{tikzpicture}[
  level 1/.style={sibling distance=20em},
  level 2/.style={sibling distance=10em},
  level 3/.style={sibling distance=7em},
  level 4/.style={sibling distance=10em},
  dm/.style={thin},
  every node/.style = {thick,shape=rectangle,draw,rounded corners, align=center}]
  \node {$\MP{1}$ $\{\mathit{edge}, \mathit{in}\}$}
    child { node {$\MP{sg} \{\mathit{vertex}, \mathit{edge}, \mathit{in}\}$} 
        child { node[dm] {$\M{1} = (\mathit{vertex}: \eqref{eq:hce1} \wedge \eqref{eq:hce2})$} }
        child { node[dm] {$\M{2} = ({\it in} : \eqref{eq:choicerulefo})$} }
    }
    child { node {$\MP{hc} \ \{{\it vertex,\, in}\}$}
      child[sibling distance=6em] { node {$\MP{cn} \ \{{\it vertex,\, in}\}$ }
        child { node[dm] {$\M{3} = ({\it r} : \eqref{eq:hcr1} \wedge \eqref{eq:hcr2} \ )$} }
        child { node[dm] {$\eqref{eq:hcmod4}$} }
      }
      child[sibling distance=6em] { node[dm] {$\eqref{eq:hcmod2.1}$} } 
    };
\end{tikzpicture}
\caption{Hierarchical structure of modular program $\MP{1}$.}
\label{fig:tree}
\end{figure}

%We start by defining the \definitions:
%\begin{IEEEeqnarray}{lCCl}
%( {\it vertex} & : & \eqref{eq:hce1} \wedge \eqref{eq:hce2} & \ ) \label{eq:hcmod1.1}\\
%({\it in} & : & \eqref{eq:choicerulefo}& \ )
%\label{eq:hcmod2}\\
%({\it r}& : & \eqref{eq:hcr1} \wedge \eqref{eq:hcr2}& \ )
%\label{eq:hcmod3}
%\end{IEEEeqnarray}
%
%\begin{align}
% &\text{SM}_{{\it edge}}[{\it edge}(a, a')\wedge \;\; \dots \;\; \wedge {\it 
% edge}(c, c') ]\label{eq:hcmod1}\\
%&\begin{aligned}
%\text{SM}_{{\it vertex}}[&\forall x y(  {\it edge}(x,y)  \rar %{\it vertex}(x))\, \wedge\\
%&\forall x y(  {\it edge}(y,x)  \rar {\it vertex}(x))
%]
%\end{aligned}
%\label{eq:hcmod1.1}
%\\
%&\text{SM}_{{\it in}}[\forall x y(\neg \neg {\it in}(x,y) %\wedge {\it edge}(x,y) \rar {\it in}(x,y)]
%\label{eq:hcmod2}
%\\
%&
%\begin{aligned}
%\text{SM}_{{\it r}}[&\forall x y({\it in}(x,y) \rar {\it r}(x, %y)) \,\wedge\\
%\forall x y z({\it r}(x,z) \wedge {\it r}(z,y)\rar {\it %r}(x,y))]
%\end{aligned}
%\label{eq:hcmod3}
%\end{align}
%
% This presentation is in spirit with the HL-modular program in~\citep{har16} given for the \hc encoding.

%
%The only public predicates of modular program $\mp_{cn}$ are~${\it vertex} $ and  ${\it in}$.

%%%%%%%%%%%%%%%%%%%%%%%%%%%%%%%%%%%%%%%%%%%%%%%%%%%%%%%%%%%%
\section{A formal specification language}
\label{sec:specification}
%%%%%%%%%%%%%%%%%%%%%%%%%%%%%%%%%%%%%%%%%%%%%%%%%%%%%%%%%%%%

Hamiltonian cycles constitute a good example of a typical use of ASP for solving a search problem.
Following~\cite{bre11}, a {\em search problem}~$X$ can be seen as a set of instances, being each
{\em instance}~$I$ assigned a finite set~$\Theta_{X}(I)$ of solutions. 
Under our verification method, we propose constructing a modular program~$\mp_X$ that adheres to the specifications of $X$ so that when extended with modular program $\mp_I$ representing an instance~$I$ of $X$, the answer sets of this join are in one to one correspondence with members in~$\Theta_X(I)$.
%
%In our methodology, a formal specification for a search problem~$\pi$ is described as a modular program. 
%
Then, we can use~$\mp_X$ to argue, module by module, that each of its components actually corresponds to some part of the informal description of the problem.
To illustrate these ideas we prove next that, indeed, program $\MP{1}$ presented in Figure~\ref{fig:tree} and described in the last section is a \emph{formal specification} for the search of Hamiltonian cycles.
%
%Note that this program does not contain any formula~\eqref{eq:hcmod2.2} as this is not necessary to capture the \hc problem.
%
% Note that~$\mp_{cn}$ is a nested modular program itself.%
%
% We include proof sketches of the elements of this argument to illustrate how this methodology can be used in practice.
%
We start by presenting the informal readings of all \definitions occurring in the program --- i.e., members of  $\modules(\MP{1})$.
The \definitions~\M{1}, \M{2} and \M{3} intuitively formulate the following {\em Statements}:
\begin{enumerate} 
\setlength\itemsep{0em}
% \item\label{l:item1}  An Herbrand {\it edge}-stable model of \definition~\eqref{eq:hcmod1} is any  Herbrand 
% interpretation $I$ over $\sigma(\mh)$ such that the extension\footnote{The {\sl extension} of a predicate in an interpretation is the set of 
% tuples that satisfy the predicate in that interpretation.} of the ${\it 
% edge}$ predicate in $I$ corresponds to the  facts in~\eqref{eq:hcfacts} (which encode a given graph). 
\item[\stat{1}:] ``In any model of~\M{1},
the extent of ${\it vertex}$ collects all objects in the extent of ${\it edge}$.''

\item[\stat{2}:] ``In any model of~\M{2},
the extent of ${\it in}$ is a subset of the extent of 
 ${\it edge}$.''
\item[\stat{3}:] ``In any model of~\M{3},
the extent of {\it r} is the transitive closure of the extent of {\it in}.''
\end{enumerate}
These statements can be seen as {\em achievements} of each \definition in the sense of~\cite{lifschitz17a}, 
that are agnostic to the context where \definitions appear.
This closely aligns with good software engineering practices, where the emphasis is made on modularity/independence of code development.
An intuitive meaning of formulas~\eqref{eq:hcmod4} and~\eqref{eq:hcmod2.1} is self explanatory given the underlying conventions of FO logic: %and translates into the following informal statements:
\begin{enumerate}
\item[\stat{\eqref{eq:hcmod4}}:] ``In any model of~\eqref{eq:hcmod4},
the extent of ${\it r}$ contains each possible pair of vertices.''

\item[\stat{\eqref{eq:hcmod2.1}}:] ``In any model of~\eqref{eq:hcmod2.1},
the extent of ${\it in}$ does not contain two different pairs with the same left component.''
% \item\label{l:hcmod2.2} In any model of~\eqref{eq:hcmod2.2},
% the extent of ${\it in}$ does not contain two different pairs with the same right component.
\end{enumerate}
Statements~\stat{1} and~\stat{2} translate into a joint {\em Statement} about module~$\MP{sg}$:
\begin{enumerate}
\item[\stat{sg}:]\label{l:hcmod.sg}
``In any model~$\cI$ of~$\MP{sg}$, $\tuple{\mathit{vertex}^\cI,\mathit{in}^\cI}$ is a subgraph of $\tuple{\mathit{vertex}^\cI,\mathit{edge}^\cI}$.''
\end{enumerate}
Similarly, we identify the following two combined statements:
\begin{enumerate}
\item[\stat{cn}:]\label{l:hcmod7}
``In any model of~$\mp_{cn}$,
the extent of ${\it in}$ forms a strongly connected graph covering all vertices.''

\item[\stat{hc}:]
``In any model of~$\mh$,
the extent of ${\it in}$ is a cycle visiting all vertices exactly once'' or equivalently ``it induces a graph that is a \hc.''
\end{enumerate}
Note that each subcomponent of $\MP{1}$ is small enough so that the verification of its corresponding statement is a manageable and self-isolated task. 
Note also that statements~\stat{sg} and~\stat{hc} are the result of fixing the public vocabulary for the statements \ref{herarchy:1} and~\ref{herarchy:2} identified from the informal description in Section~\ref{sec:motivation}.
Statements~\ref{herarchy:1a} and~\ref{herarchy:1b} in the informal description correspond to statements~\stat{1}
and~\stat{2}, respectively.
Statements~\ref{herarchy:2a} and~\ref{herarchy:2b} correspond to statements~\stat{\eqref{eq:hcmod2.1}}
and~\stat{cn}.

At this point, we have completed~\ref{item:methodology.2b} for our example, with the modular program~$\MP{1}$ and the informal statements to compare with.
Now, \ref{item:methodology.3} consists on building claims about the correctness of the modules versus the statements.
Proofs for the correctness of statements~\stat{1}, \stat{2}, \stat{3}, \stat{\eqref{eq:hcmod4}} and~\stat{\eqref{eq:hcmod2.1}}
are obtained using properties of the $\SM$ operator and can be found in~\ref{sec:formalization}.
%
% In the sequel, we take their correctness for granted.
%
The formalization of Statement~\stat{\eqref{eq:hcmod2.1}} follows from the general result below, if we just replace predicate names $p$ and $q$ by $\mathit{in}$ and~$\mathit{r}$, respectively.
%For instance, the following result formalizes Statement~\stat{\eqref{eq:hcmod2.1}}.
Its proof can also be found in~\ref{sec:formalization}.

\begin{proposition}
\label{prop:transitive.closure}
Let 
formula $F_{tr}^{qp}$ be 
\beq
\forall xy\big(p(x,y)\rar q(x,y)\big)\,\wedge
\forall xyz\big(q(x,z)\wedge q(z,y)\rar q(x,y)\big)
\eeq{eq:formulatheorem}
For any arbitrary predicates $p$ and $q$, any model $\cI$ of \definition 
$(q:F_{tr}^{qp})$
%${\text{\rm SM}}_{q}[F^{tr}_{qp}]$
is such that the    
the extent of {\it q} is the transitive closure of the relation constructed from the extent of~{\it p}.
\end{proposition}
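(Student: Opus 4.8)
The plan is to unfold the semantics of the \definition and reduce the claim to a minimality argument on the intensional predicate $q$. By definition, $\Phi\big((q:F_{tr}^{qp})\big) = \SM_q[F_{tr}^{qp}]$, which by~\eqref{f:sm} equals $F_{tr}^{qp} \wedge \neg\exists U\big((U < q) \wedge (F_{tr}^{qp})^*\big)$, with $U$ a fresh binary predicate variable and $p$ treated extensionally (fixed). So any interpretation $\cI$ satisfying this formula must satisfy both conjuncts. Writing $T$ for the transitive closure of the relation $p^\cI$, understood as the least transitively closed relation containing $p^\cI$ (which exists for an arbitrary domain as the intersection of all transitively closed superrelations of $p^\cI$), I would establish $q^\cI = T$ by proving the two inclusions separately.

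For the inclusion $T \subseteq q^\cI$, I would use only the first conjunct $\cI \models F_{tr}^{qp}$: its left part in~\eqref{eq:formulatheorem} says $p^\cI \subseteq q^\cI$, and its right part says $q^\cI$ is transitively closed. Since $T$ is by definition the smallest transitively closed relation containing $p^\cI$, this immediately yields $T \subseteq q^\cI$. For the converse inclusion $q^\cI \subseteq T$, I would argue by contradiction against the stability (second) conjunct, using $U := T$ as the witness. First I would compute $(F_{tr}^{qp})^*$; following the clause $(G\to H)^* = (G^*\to H^*)\wedge(G\to H)$ together with the clause for atoms of $\vec{p}$, the reduct becomes the conjunction of
\[
\forall xy\big((p(x,y)\to U(x,y)) \wedge (p(x,y)\to q(x,y))\big)
\]
and
\[
\forall xyz\big(((U(x,z)\wedge U(z,y))\to U(x,y)) \wedge ((q(x,z)\wedge q(z,y))\to q(x,y))\big).
\]
Assuming $q^\cI \neq T$, the already-established inclusion gives $T \subsetneq q^\cI$, hence $T < q^\cI$ in the sense of $\leq$ on extents. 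It then remains to check that the interpretation extending $\cI$ by $U \mapsto T$ satisfies $(F_{tr}^{qp})^*$: the unstarred conjuncts $p(x,y)\to q(x,y)$ and $(q(x,z)\wedge q(z,y))\to q(x,y)$ hold because $\cI \models F_{tr}^{qp}$; the conjunct $p(x,y)\to U(x,y)$ holds because $p^\cI \subseteq T$; and $(U(x,z)\wedge U(z,y))\to U(x,y)$ holds because $T$ is transitively closed. This produces a witness $U < q$ satisfying $(F_{tr}^{qp})^*$, contradicting the second conjunct of $\SM_q[F_{tr}^{qp}]$. Hence $q^\cI \subseteq T$, and combining the two inclusions gives $q^\cI = T$.

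The routine but error-prone part is the explicit computation of $(F_{tr}^{qp})^*$ — in particular keeping track that the $(G\to H)^*$ clause duplicates each implication into a starred copy (constraining $U$) and an unstarred copy (constraining $q$), so that the witness must satisfy \emph{both} families of constraints simultaneously. The only genuine content beyond this bookkeeping is the single observation that $T$ is at once (i) contained in the $q$-extent of every model of $F_{tr}^{qp}$ and (ii) itself a model of the $U$-part of the reduct; both follow directly from $T$ being the least transitively closed relation over $p^\cI$. Consequently no delicate argument about well-foundedness, finiteness, or the grounding of $q$ is required, and the result holds for arbitrary (not necessarily Herbrand or finite) interpretations.
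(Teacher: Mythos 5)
Your proof is correct, but it takes a genuinely different route from the paper's. The paper proves Proposition~\ref{prop:transitive.closure} by reduction to circumscription: it observes (citing Ferraris, Lee and Lifschitz) that for any formula that is a conjunction of rules of the form~\eqref{eq:rule_fo} \emph{without negation}, $\SM_{\bf p}[F]$ is equivalent to $\CIRC_{\bf p}[F]$, whose models are the models of $F$ in which the extent of ${\bf p}$ is minimal with the remaining predicates fixed; since $F_{tr}^{qp}$ is such a positive conjunction, its models under $\SM_q$ are the models of~\eqref{eq:formulatheorem} with $q$-minimal extent, and the least transitively closed relation containing $p^\cI$ is by definition the transitive closure. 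You instead unfold the definition of $\SM_q$ from scratch: you compute the reduct $(F_{tr}^{qp})^*$ explicitly and exhibit the transitive closure $T$ itself as the witness $U := T$ that would violate stability whenever $T \subsetneq q^\cI$. In effect you re-derive, for this particular formula, exactly the minimality property that the cited SM--circumscription equivalence provides wholesale; your two ``genuine content'' observations (that $T$ is contained in the $q$-extent of every model of $F_{tr}^{qp}$, and that $T$ satisfies the $U$-part of the reduct) are precisely what makes $T$ the minimal model in the circumscriptive reading. What each approach buys: the paper's argument is a one-liner modulo the external theorem and generalizes immediately to every negation-free program; yours is self-contained, needs nothing beyond the definition of SM in Section~\ref{sec:sm}, and makes explicit the bookkeeping (each implication splitting into a starred copy constraining $U$ and an unstarred copy constraining $q$) that the high-level argument hides. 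Your closing remark that the result holds for arbitrary, not necessarily Herbrand, interpretations also matches the paper's statement, which is phrased in terms of models rather than answer sets.
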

% \begin{proofs}
% By Remark~1 from \citep{feleli11a}, we can replace the $\SM_q$ operator in
% ${\text{\rm SM}}_{q}[F^{tr}_{qp}]$
% by $\text{CIRC}_q$.
% %
% Each model of~$\hbox{CIRC}_{q}[F^{tr}_{qp}]$ is a model of~$F^{tr}_{qp}$ where the extent of~$q$ is minimal. Given FO formula $F^{tr}_{qp}$, this suffices to capture the transitive closure of the extent of~{\it p}.
% \end{proofs}

%The semantics of \definition~\M{3} is given by the SO formula 
%\mbox{$\text{SM}_{{\it r}}[  \eqref{eq:hcr1} \wedge \eqref{eq:hcr2}  ]$}.
%
%
% The rest of the modules admit similar proofs that we omit here for space reasons.
%
The next two results prove the correctness of~\stat{cn} and~\stat{hc}, respectively, for~$\mp_{cn}$ and~$\mh$. 

\begin{proposition}\label{prop:connectivity}
Let 
$F^{v}$ be formula
%\beq 
$\forall x y (  \neg{\it q}(x, y) \wedge {\it v}(x)\wedge 
{\it v}(y)\rar \bot)$; $F_{tr}^{qp}$ be formula~\eqref{eq:formulatheorem};
% \eeq{eq:formulatheorem2}
 $\MP{cn}^{vpq}$  
be a modular program $\tuple{\set{v,p}, \set{\,(q :F_{tr}^{qp}),\, F^v\,}}$,
$\cI$ be an interpretation,
and
$\tuple{v^{\cI},p^{\cI}}$
be a graph.
Then, $\cI$ is a model of~$\MP{cn}^{vpq}$
iff 
for every pair \mbox{$a,b \in v^{\cI}$} of distinct vertices,
there is a directed path from~$a$ to~$b$ in $\tuple{v^{\cI},p^{\cI}}$.
\end{proposition}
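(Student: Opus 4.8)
The plan is to unfold the semantics of the modular program and then collapse the second-order existential over the hidden predicate to a single canonical witness by invoking Proposition~\ref{prop:transitive.closure}. First I would compute $\Phi(\MP{cn}^{vpq})$. The program has public predicates $\set{v,p}$ and two modules: the \definition $(q:F_{tr}^{qp})$, whose semantics is $\Phi((q:F_{tr}^{qp}))=\SM_q[F_{tr}^{qp}]$, and the bare formula $F^v$, whose semantics is just $F^v$. The only free predicate symbol occurring in these that is not public is $q$, so the hiding tuple is $\vec{h}=(q)$ and
\[
\Phi(\MP{cn}^{vpq}) \;=\; \exists q\,\big(\SM_q[F_{tr}^{qp}] \wedge F^v\big).
\]
Hence $\cI$ is a model of $\MP{cn}^{vpq}$ iff there is some extent for $q$ over the domain of $\cI$ making both conjuncts true.

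Next I would eliminate the existential. By Proposition~\ref{prop:transitive.closure}, every model of the \definition $(q:F_{tr}^{qp})$ interprets $q$ as the transitive closure of $p$; conversely that same relation is a model (the existence half, which I would read off from the proof of Proposition~\ref{prop:transitive.closure}, since for fixed extensional $p$ the transitive closure is the least, hence the unique stable, relation satisfying $F_{tr}^{qp}$). Thus the witness for $\exists q$ is forced to be unique: writing $TC$ for the transitive closure of $p^{\cI}$, the formula $\exists q(\SM_q[F_{tr}^{qp}]\wedge F^v)$ holds at $\cI$ iff $F^v$ holds when $q$ is set to $TC$. Since $F^v$ is interpreted classically, it is equivalent to $\forall xy\,(v(x)\wedge v(y) \rar q(x,y))$, so $\cI$ is a model of $\MP{cn}^{vpq}$ iff for every $a,b\in v^{\cI}$ the pair $(a,b)$ lies in $TC$.

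It then remains to translate membership in $TC$ into the existence of directed paths and to reconcile the quantification over \emph{all} pairs with the ``distinct vertices'' phrasing of the claim. Using the hypothesis that $\tuple{v^{\cI},p^{\cI}}$ is a graph (so $p^{\cI}\subseteq v^{\cI}\times v^{\cI}$), $(a,b)\in TC$ holds exactly when there is a directed path from $a$ to $b$ in $\tuple{v^{\cI},p^{\cI}}$, by the standard inductive characterization of the transitive closure as the pairs joined by a $p^{\cI}$-chain of length at least one. For distinct $a,b$ this is precisely the right-hand side, which gives the forward direction at once. For the converse I expect the main obstacle to be the diagonal $a=b$, which $F^v$ also demands: I would argue that whenever $v^{\cI}$ contains two distinct vertices and the right-hand side holds, composing a directed path from $a$ to $b$ with one from $b$ to $a$ yields a closed $p^{\cI}$-walk, whence $(a,a)\in TC$, so the all-pairs condition follows; the only genuinely delicate case is $|v^{\cI}|\le 1$, which I would dispatch separately. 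Combining the two directions yields the stated equivalence.
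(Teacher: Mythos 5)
You follow essentially the same route as the paper's own proof: unfold $\Phi(\MP{cn}^{vpq})$ into $\exists q\,\big(\SM_q[F_{tr}^{qp}]\wedge F^v\big)$, invoke Proposition~\ref{prop:transitive.closure} to force the witness for $q$ to be the transitive closure of $p^{\cI}$, rewrite $F^v$ classically as $\forall x y\,(v(x)\wedge v(y)\rar q(x,y))$, and translate transitive-closure membership into existence of directed paths. The paper packages the witness step as an interpretation $\cJ$ over $p,q,v$ agreeing with $\cI$ on $p,v$ rather than collapsing the existential to its unique witness, but that is the same argument.

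Where you genuinely depart from the paper is in your care about the diagonal, and that care is warranted. The instantiation $x=y$ of $F^v$ demands $q(a,a)$ for every $a\in v^{\cI}$, whereas the proposition's right-hand side quantifies only over \emph{distinct} pairs; the paper's proof silently conflates the two (its right-to-left direction simply asserts a path between ``any two vertices''). Your composition argument (a path from $a$ to some $c\neq a$ followed by a path from $c$ back to $a$) closes this hole whenever $|v^{\cI}|\geq 2$. However, the case you propose to ``dispatch separately'' cannot be dispatched: if $v^{\cI}=\set{a}$ and $p^{\cI}=\emptyset$, the right-hand side holds vacuously, yet the unique stable witness for $q$ is the transitive closure of the empty relation, which falsifies the required $q(a,a)$, so $\cI$ is not a model. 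Hence the stated equivalence fails for a one-vertex, edgeless graph; repairing it requires either an extra hypothesis (e.g., that every vertex occurs in some edge, as the paper later assumes in Proposition~\ref{prop:correct1}) or restating the right-hand side over all, not necessarily distinct, pairs. So your proof is correct exactly where the proposition is true, and your extra care exposes a blind spot in the paper's own argument rather than a defect in yours.
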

\begin{proofs}
We just show the left to right direction.
The complete proof can be found in~\ref{sec:formalization}.
%
% The graph is connected iff, for every pair of vertices \mbox{$a,b \in v^{\cI}$},
% there is a directed path from $a$ to $b$.
% , that is, there is a subset $\set{a_0, \dotsc, a_n} \subseteq v^\cI$
% such that $a_0 = a$, $a_n = b$ and every $1 \leq i \leq n$,
% satisfies
% $(a_{i-1},a_{i}) \in p^\cI$.
Note first that $F^{v}$ is classically equivalent to  $\forall x y ( {\it v}(x)\wedge {\it v}(y)\rar {\it q}(x, y))$,
so any pair of vertices satisfies $(a,b) \in q^\cI$.
From Proposition~\ref{prop:transitive.closure},
relation~$q^\cI$ is the transitive closure of~$p^\cI$.
Hence,  path from $a$ to $b$  exists in graph $\tuple{v^{\cI},p^{\cI}}$.
\end{proofs}

\begin{proposition}\label{prop:connectivityb}
Let $\MP{cn}^{vpq}$ be as in Proposition~\ref{prop:connectivity}, $F^{p}$ be formula $\forall x y z({\it p}(x,y)\wedge {\it p}(x,z)\wedge \neg(y=z) \rar\bot)$, 
$\mh^{vpq}$ be modular program~$\tuple{\set{v,p}, \set{\, \MP{cn}^{vpq}, \, F^p \,}}$ and
$\cI$ be an interpretation of $\mh^{vpq}$.
Then, $\cI$ is a model of graph~$\mh^{vpq}$
iff $p^{\cI}$ are the edges of a Hamiltonian cycle of~$\tuple{v^{\cI},p^{\cI}}$
 that is,
% \begin{enumerate}
    % \item $V = \set{ a \mid (a,b) \in \mathit{in}^\cI } \cup \set{ b \mid (a,b) \in \mathit{in}^\cI}$
    % \item[]
    the elements of $\mathit{p}^\cI$ can be arranged as a directed cycle
    $(v_1,v_2),(v_2,v_3),\dots, (v_n,v_1)$
    so that $v_1,\dotsc,v_n$ are pairwise distinct
    and~$\mathit{v}^\cI=\{v_1,\dotsc,v_n\}$.
\end{proposition}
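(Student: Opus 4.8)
The plan is to unfold the semantics of $\mh^{vpq}$ so that it reduces to the conjunction of the connectivity condition already analyzed in Proposition~\ref{prop:connectivity} and the out-degree restriction expressed by $F^p$, and then to settle a purely graph-theoretic equivalence. First I would compute $\Phi(\mh^{vpq})$. Since $\mh^{vpq}=\tuple{\set{v,p},\set{\MP{cn}^{vpq},F^p}}$, the definition of $\Phi$ for modular programs gives $\Phi(\mh^{vpq})=\exists\vec{h}\,(\Phi(\MP{cn}^{vpq})\wedge\Phi(F^p))$, where $\vec{h}$ collects the free predicate symbols of the two conjuncts that are not public. The only non-public predicate in play is $q$, but $q$ is intensional inside $\MP{cn}^{vpq}$ and is therefore already existentially quantified within $\Phi(\MP{cn}^{vpq})$; hence the free predicates of that conjunct are just $\set{v,p}$. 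Likewise $F^p$ is a bare formula (a \definition with empty intensional tuple), so $\Phi(F^p)=F^p$, whose only free predicate is $p$. Consequently $\vec{h}$ is empty and $\Phi(\mh^{vpq})$ simplifies to $\Phi(\MP{cn}^{vpq})\wedge F^p$. Thus $\cI$ is a model of $\mh^{vpq}$ exactly when $\cI\models\Phi(\MP{cn}^{vpq})$ and $\cI\models F^p$.

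With this simplification in hand, I would read off the two conjuncts. By Proposition~\ref{prop:connectivity}, $\cI\models\Phi(\MP{cn}^{vpq})$ holds iff the graph $\tuple{v^\cI,p^\cI}$ is strongly connected on $v^\cI$, i.e.\ there is a directed path between every ordered pair of distinct vertices. The conjunct $F^p$, read classically, says that no vertex has two distinct outgoing edges, i.e.\ every vertex of $v^\cI$ has out-degree at most one in $p^\cI$. The proposition then amounts to the graph-theoretic claim: a finite digraph on $v^\cI$ that is strongly connected and has maximum out-degree one is precisely a single directed Hamiltonian cycle spanning $v^\cI$.

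The core of the argument is this graph-theoretic equivalence. For the forward direction, assuming strong connectivity and out-degree $\le 1$, and taking the non-degenerate case $n=|v^\cI|\ge 2$, strong connectivity forces every vertex to have out-degree $\ge 1$ and in-degree $\ge 1$; combined with the out-degree bound, every vertex has out-degree exactly one. A counting argument then pins down in-degrees: $|p^\cI|=\sum_{x}\mathrm{outdeg}(x)=n=\sum_{x}\mathrm{indeg}(x)$ with each in-degree $\ge 1$, so every in-degree is exactly one as well. A finite digraph in which every vertex has in-degree and out-degree one is a disjoint union of directed cycles, and strong connectivity collapses this to a single cycle through all of $v^\cI$; arranging its edges as $(v_1,v_2),\dots,(v_n,v_1)$ yields exactly the claimed form. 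The converse is immediate: in a Hamiltonian cycle every vertex has out-degree one (so $F^p$ holds) and following the cycle gives a directed path between any two vertices (so $\Phi(\MP{cn}^{vpq})$ holds by Proposition~\ref{prop:connectivity}).

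The main obstacle I anticipate is not the counting lemma itself but two bookkeeping points around it. The first is justifying the semantic simplification of the first paragraph cleanly, in particular making explicit that $q$ contributes no top-level existential and that the standing hypothesis that $\tuple{v^\cI,p^\cI}$ is a graph lets me treat $p^\cI$ as a genuine edge set over $v^\cI$ when invoking degree arguments. The second is the degenerate boundary case $n\le 1$, where strong connectivity is vacuous; here I would note separately whether a single vertex (with or without a self-loop) is intended to count as a Hamiltonian cycle under the stated $(v_1,\dots,v_n)$ formulation, and restrict or comment accordingly so that the equivalence holds on the nose.
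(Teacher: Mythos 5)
Your proposal is correct, and its combinatorial core is genuinely different from the paper's. The scaffolding is shared: your right-to-left direction is essentially the paper's (a Hamiltonian cycle yields paths between all vertex pairs, so Proposition~\ref{prop:connectivity} applies, and the cycle trivially satisfies $F^p$), and both left-to-right arguments begin by invoking Proposition~\ref{prop:connectivity} and reading $F^p$ as an out-degree bound. The divergence is in how the cycle is extracted. The paper concatenates directed paths obtained from connectivity into a closed walk visiting every vertex of $v^\cI$, then uses $F^p$ --- via Statement~\stat{\eqref{eq:hcmod2.1}} (modulo predicate names) --- to conclude that the $v_i$ in this walk are pairwise distinct, so the walk is a simple spanning cycle. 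You instead argue by degree counting: for $n \geq 2$, connectivity forces out-degree and in-degree at least one at every vertex, $F^p$ caps out-degree at one, and the count $|p^\cI| = \sum_x \mathrm{outdeg}(x) = n = \sum_x \mathrm{indeg}(x)$ then pins every in-degree to exactly one; a finite digraph with all in- and out-degrees one is a disjoint union of directed cycles, and strong connectivity collapses it to a single spanning cycle. Your route is the more rigorous of the two: the paper's construction of a single closed walk covering \emph{all} vertices is left implicit, and its distinctness claim is asserted rather than derived, whereas each step of your argument is a standard, checkable lemma. Two further points in your favor: your explicit unfolding of $\Phi(\mh^{vpq})$ into $\Phi(\MP{cn}^{vpq}) \wedge F^p$ with an empty top-level existential is done only tacitly in the paper; and you flag the degenerate case $n \leq 1$, which both the statement and the paper's proof pass over silently --- indeed, for $v^\cI$ a singleton and $p^\cI = \emptyset$, the interpretation models $\mh^{vpq}$ vacuously yet $p^\cI$ cannot be arranged as the stated cycle, so a restriction or convention there is genuinely needed. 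One small imprecision to fix: $q$ is existentially quantified inside $\Phi(\MP{cn}^{vpq})$ because it is \emph{hidden} (absent from the public set $\set{v,p}$), not because it is intensional; the conclusion you draw from this is nevertheless correct.
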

\begin{proofs}
Again, we show only the left to right direction.
See~\ref{sec:formalization} for the complete proof.
If $\cI$ is a model of $\mh^{vpq}$,
the hypothesis in the enunciate implies
that for any pair \mbox{$v_1,v_m \in v^{\cI}$} of vertices,
there is a directed path
$(v_1,v_2),\allowbreak(v_2,v_3),\allowbreak(v_3,v_4),\dots,\allowbreak(v_{m-1}, v_m)$
in  $\tuple{v^{\cI},p^{\cI}}$.
Hence,
there exists:
\beq 
(v_1,v_2),\allowbreak(v_2,v_3),\allowbreak(v_3,v_4),\dots,\allowbreak(v_{m-1},v_m),\allowbreak (v_{m+1},v_{m+1}),\dotsc,\allowbreak(v_n,v_1)
\eeq{eq:path}
in graph $\tuple{v^{\cI},p^{\cI}}$
such that  every vertex in $v^{\cI}$ appears in it.
Since $\cI$ is also a model of~$F^p$, Statement~\stat{\eqref{eq:hcmod2.1}} (modulo names of predicate symbols) is applicable.
This implies $v_i \neq v_j$ for all $i \neq j$
and, thus, that all edges in~\eqref{eq:path} are distinct.
Therefore, \eqref{eq:path} is a directed cycle.
Since this cycle  covers all vertices of~$\tuple{v^{\cI},p^{\cI}}$,
 it is also a Hamiltonian cycle.
\end{proofs}

\noindent
Modular programs~$\mp_{cn}$ and~$\mh$ coincide with modular programs
% \mbox{$\tuple{\set{v,p}, \set{\,(q : F^{tr}_{qp}),\, F_v\,}}$}
$\MP{cn}^{vpq}$ and~$\mh^{vpq}$,
when  predicate symbols~$v$,~$p$ and~$q$ are replaced by~$\mathit{vertex}$, $\mathit{in}$ and~$\mathit{r}$, respectively.
Statements~\stat{cn} and~\stat{hc} follow immediately.
Note also that modular programs~$\mp_{cn}$ and~$\mh$ also preserve their meaning inside a larger modular program mentioning predicate symbol~$r$ in other parts. Indeed, symbol $r$ is hidden or \emph{local} (existentially quantified) so that its use elsewhere in a larger modular program has a different meaning.

To complete the modularization of the \hc problem, we develop  the encoding for a graph instance.
Given a set $E$ of graph edges,  $M_E$ denotes a \definition
%we define
\beq
% \begin{aligned}
\big(\, edge :  \bigwedge\set{\mathit{edge}(a,b) \mid (a,b) \in E} \,\big)
%\\
%M(E) \quad &\eqdef \quad \text{\normalfont SM}_{\mathit{edge}}[M_E]
% \end{aligned}
\eeq{eq:pe}
% and the \definition $M(E) \eqdef \text{\normalfont SM}_{\mathit{edge}}[M_E]$.
%
The intuitive and formal meaning of \definition  $M_E$  is captured by the statement:
\begin{enumerate}
\setcounter{enumi}{7}
\item[\stat{E}:]\label{l:hcmod9}
``In any model of~$M_E$,
the extent of ${\it edge}$ is $E$.''
\end{enumerate}
Now, the \hc problem on a given graph instance with edges~$E$
is encoded by~$\MP{1}(E) := \tuple{ \, \set{\it in}, \, \set{ \MP{1} , \, M_E  } \, }$.
To prove that $\MP{1}(E)$ obtains the correct solutions, we can now just simply rely on the already proved fulfillment of the  statements for $\MP{1}$.

\begin{proposition}\label{prop:correct1}
Let $G = \tuple{V,E}$ be a graph with non-empty sets of vertices~$V$ and edges~$E$, where every vertex occurs in some edge, and~$\cI$ be an interpretation over signature~$\set{\mathit{in}}$.
Then,
$\cI$ is an answer set of~$\MP{1}(E)$ iff
$\mathit{in}^\cI$ is a \hc of $G$.
% \end{enumerate}
\end{proposition}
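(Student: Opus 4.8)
The plan is to prove the biconditional by unfolding the recursively defined semantics of $\Phi(\MP{1}(E))$ layer by layer and reducing each layer to an already-established achievement. Since $\MP{1}(E) = \tuple{\set{\it in}, \set{\MP{1}, M_E}}$ has public vocabulary $\set{\it in}$, and the only free predicate of $\Phi(\MP{1}) \wedge \Phi(M_E)$ outside $\set{\it in}$ is $\mathit{edge}$, the definition of $\Phi$ gives $\Phi(\MP{1}(E)) = \exists \mathit{edge}\,(\Phi(\MP{1}) \wedge \Phi(M_E))$. By the meaning of this existential (the filtering of hidden predicates, cf.\ Proposition~\ref{thm:sm-traditional}), $\cI$ over $\set{\it in}$ is an answer set iff it extends to a Herbrand model $\cJ$ of $\Phi(\MP{1}) \wedge \Phi(M_E)$ that additionally interprets $\mathit{edge}$ and, one level further down inside $\Phi(\MP{1}) = \exists \mathit{vertex}\,(\Phi(\MP{sg}) \wedge \Phi(\MP{hc}))$, the predicate $\mathit{vertex}$. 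So the first step is to replace ``answer set over $\set{\it in}$'' by ``existence of a suitable expansion $\cJ$'' over $\set{\mathit{in}, \mathit{edge}, \mathit{vertex}}$.

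The next step is to show that both existential witnesses are forced. The conjunct $\Phi(M_E)$ together with Statement~\stat{E} forces $\mathit{edge}^\cJ = E$. Using Statement~\stat{1} for module~\M{1} inside $\MP{sg}$, the extent of $\mathit{vertex}$ in any model is exactly the set of objects occurring in $\mathit{edge}^\cJ = E$; by the hypothesis on $G$ that every vertex of $V$ occurs in some edge, that set is precisely $V$, so $\mathit{vertex}^\cJ = V$. Hence the only candidate expansion fixes $\mathit{edge}^\cJ = E$ and $\mathit{vertex}^\cJ = V$, and the problem reduces to deciding whether the interpretation $\cJ$ with these extents and $\mathit{in}^\cJ = \mathit{in}^\cI$ is a model of $\Phi(\MP{sg}) \wedge \Phi(\MP{hc})$.

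From here I would read off the two defining conditions from the per-module results. Statement~\stat{2} (equivalently~\stat{sg}) gives $\mathit{in}^\cJ \subseteq \mathit{edge}^\cJ = E$, i.e.\ every edge of $\mathit{in}^\cI$ is an edge of $G$. Proposition~\ref{prop:connectivityb}, instantiated by $v = \mathit{vertex}$, $p = \mathit{in}$, $q = \mathit{r}$ so that $\mh^{vpq}$ becomes $\MP{hc}$, yields that $\cJ$ is a model of $\Phi(\MP{hc})$ iff $\mathit{in}^\cJ$ is the edge set of a Hamiltonian cycle of $\tuple{\mathit{vertex}^\cJ, \mathit{in}^\cJ} = \tuple{V, \mathit{in}^\cI}$. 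Conjoining, the expansion $\cJ$ exists iff $\mathit{in}^\cI \subseteq E$ and $\mathit{in}^\cI$ forms a cycle covering each vertex of $V$ exactly once, which is exactly the statement that $\mathit{in}^\cI$ is a Hamiltonian cycle of $G$. For the right-to-left direction one additionally needs the sufficiency readings: given a Hamiltonian cycle $\mathit{in}^\cI$ of $G$, set $\mathit{edge}^\cJ = E$ and $\mathit{vertex}^\cJ = V$ and verify that $\cJ$ satisfies each $\SM$ conjunct (the choice rule~\eqref{eq:choicerulefo} is satisfied by any subset of $\mathit{edge}$, and $\mathit{vertex} = V$ is the minimal model of~\M{1}), so that $\cJ$ witnesses both existentials.

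The main obstacle I anticipate is bookkeeping across the two layers of hidden-predicate quantification rather than any genuinely new idea. I must justify that achievements proved for each module \emph{in isolation} (Statements~\stat{1}, \stat{2} and Proposition~\ref{prop:connectivityb}) still apply verbatim to $\cJ$, which is a model of a \emph{conjunction} of modules, and that the innermost hidden predicate $\mathit{r}$ remains properly localized inside $\MP{cn}$ so that it plays no role outside $\Phi(\MP{hc})$. Making the filtering reduction precise at each level, and confirming that the forced witnesses $\mathit{edge}^\cJ = E$ and $\mathit{vertex}^\cJ = V$ are both necessary and actually available, is where the care is required; once that is settled, the equivalence follows by simply conjoining the three achievements.
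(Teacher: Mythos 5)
Your proposal is correct and takes essentially the same route as the paper's own proof: both reduce ``$\cI$ is an answer set of $\MP{1}(E)$'' to the existence of a Herbrand expansion $\cJ$ interpreting the hidden predicates $\mathit{edge}$ and $\mathit{vertex}$, then invoke the per-module achievements (Statements~\stat{E}, \stat{1}, \stat{sg} and Statement~\stat{hc} via Proposition~\ref{prop:connectivityb}) together with the hypothesis that every vertex of~$G$ occurs in some edge to pin down $\mathit{edge}^\cJ=E$ and $\mathit{vertex}^\cJ=V$, and finally conjoin the subgraph and cycle conditions. Your write-up is merely more explicit than the paper's about unfolding the nested existential quantifiers and constructing the witnessing expansion in the right-to-left direction, which the paper handles in the same way but more briefly.
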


\begin{proofs}
% The proof is obtained by combining the results corresponding to all the modules of program~$\MP{1}(E)$, whose informal statements can be found above.
% %
% In particular,
% we have seen how statement~\ref{l:item6} was formalized in Proposition~\ref{prop:connectivity}.
% Note that any model~$\cI$ of \M{1}-\M{2}
% satisfies that~$\tuple{\vertex^\cI,\inp^\cI}$ is a subgraph of~$\tuple{\vertex^\cI,\edge^\cI}$
% (statements~\stat{2}-\stat{2}).
% Furthermore,
% from Proposition~\ref{prop:connectivity},
% for any model~$\cI$ of~$\mp_{\it c}$
% and any pair of vertices $v_1,v_n \in \vertex^\cI$,
% there is a directed path
% $(v_1,v_2),\allowbreak(v_2,v_3),\allowbreak(v_3,v_4),\dots,\allowbreak(v_{m-1}, v_m)$.
% Hence,
% there is a directed path
% $(v_1,v_2),\allowbreak(v_2,v_3),\allowbreak(v_3,v_4),\dots,\allowbreak(v_{m-1},v_m),\allowbreak (v_{m+1},v_{m+1}),\dotsc,\allowbreak(v_n,v_1)$.
% Finally, we can show that this is exactly the extension of  predicate~$\inp$.
% Any model of~$\mp_{\it c}$ that is also a model of~$\eqref{eq:hcmod2.1} \wedge \eqref{eq:hcmod2.2}$ and any vertex $v_i$ with $1 \leq i \leq n$ satisfies that
% $(v_{i-1},v_i)$ and $(v_i,v_{i+1})$,
% with $v_0 = v_{n+1} = v_1$,
% are the only pairs in $\inp^\cI$
% where $v_i$ occurs.
% As a result, we get $\vertex^\cI = \{v_1,\dotsc,v_n\}$
% and $\inp^\cI = \{ (v_1,v_2),\dots,(v_n,v_1) \}$.
% ================
We only showcase the left to right direction (see Appendix~B for the rest).
Take interpretation $\cI$ to be an answer set of~$\MP{1}(E)$.
Then, there exists a Herbrand interpretation $\cJ$ over signature $\inp,\vertex,\edge$ so that $\cJ$  coincides with~$\cI$ on the extent of $in$ and $\cJ$ is also a model of all submodules of $\MP{1}(E)$.
% namely,
% \M{1}, \M{2},\, $\mh$
% and~$M_E$.
%
Thus, $\cJ$  adheres to  Statements~\stat{sg}, \stat{hc} and~\stat{E} about these submodules.
This implies that the extent~$\inp^\cI$ forms a \hc of~$\tuple{\vertex^{\cJ},\inp^{\cJ}}$ (Statement~\stat{hc}) and that $\tuple{\vertex^{\cJ},\inp^{\cJ}}$ is a subgraph of~$\tuple{\vertex^{\cJ},\edge^{\cJ}}$ (Statement~\stat{sg}).
These two facts imply that~$\inp^\cI$ forms a \hc of the graph~$\tuple{\vertex^{\cJ},\edge^{\cJ}}$.
Moreover, $\vertex^{\cJ}$ and~$\edge^{\cJ}$ respectively coincide with sets~$V$ and~$E$ (Statements~\stat{2} and~\stat{E}).
Therefore,~$\inp^\cJ$ forms a \hc of the graph~$\tuple{V,E} = G$.
Finally, recall that $\inp^{\cJ}=\inp^{\cI}$, so the result holds.
%  Statements~\stat{\eqref{eq:hcmod2.1}} and
% $\eqref{eq:hcmod2.1} \wedge $
%The other direction is easy to verify.
\end{proofs}
This result confirms that $\MP{1}$ is indeed a correct \emph{formal specification} of the \hc problem, for any arbitrary graph instance $E$.
Propositions~\ref{prop:transitive.closure}-\ref{prop:correct1} are, in fact, an example of the application of~\ref{item:methodology.3} of our methodology to the \hc problem.
It is worth to mention that, in most cases, the decomposition in modules and the properties of the~$\SM$ operator allow us to replace the~$\SM$ operator by a FO formula (using Clark's completion) or by the circumscription operator.
This replacement greatly simplify the effort of the proofs detailed in~\ref{sec:formalization}.

%%%%%%%%%%%%%%%%%%%%%%%%%%%%%%%%%%%%%%%%%%%%%%%%%%%%%%%%%%%%%%%%%%%%%%%%%%%%%%%%%%%%%%%%%%%%%%%%
\section{Verification based on modular programs}
\label{sec:verif}
%%%%%%%%%%%%%%%%%%%%%%%%%%%%%%%%%%%%%%%%%%%%%%%%%%%%%%%%%%%%%%%%%%%%%%%%%%%%%%%%%%%%%%%%%%%%%%%%

The results in the previous section state that the answer sets of our modular specification~$\MP{1}$  correspond to the \hcs of a graph.
However, in general, there is no guarantee that the non-modular version of $\Pi$ (i.e., the regular ASP program $P$ formed by all rules in $\mp$) has the same answer sets.
%
%This is because the answer sets of, say $\tuple{\cS,\{ (\bp_1 : F_1 ) , \,  (\bp_2 : F_2 ) \}}$ and~$\tuple{\cS,\{ (\bp_1 \cup \bp_2 : F_1 \wedge F_2 ) \}}$, may differ in the general case.
%
%As a result, for instance, we  still have to prove that the answer sets of~$\fh$ and~$\MP{1}$ actually coincide, even though they deal with almost the same rules.
% The ASP semantics is non-monotonic.
% %
% Thus, in general,
% the answer sets of modular program~$\tuple{\cS,\{ (\bp_1 : F_1 ) , \,  (\bp_2 : F_2 ) \}}$
% are not necessary the same as the answer sets of~$\tuple{\cS,\{ (\bp_1 \cup \bp_2 : F_1 \wedge F_2 ) \}}$.
% %
% % the conjunction of modules $\SM_{pred(F)}[F] \wedge \SM_{pred(F')}[F']$ is not equivalent to the whole program $\SM_\vec{p}[F \wedge F']$ for $\vec{p}=pred(F) \cup pred(F')$.
% %
% \sidecomment{This paragraph could be removed}
Next, we introduce  some general conditions under which the answer sets of a modular program $\mp$ and its non-modular version $P$ coincide.
These results are useful for~\ref{item:methodology.4} of the proposed methodology.

In the rest of the section, we assume that $\mp$ has the form $\tuple{\cS,\cM}$. 
We also identify a regular program $P$ with its direct modular version $\tuple{pred(P), \set{(pred(P):P)}}$.
%
% We say that modular program $\mp$ is \emph{flat} when~$\cM$ is a set of \definitions, that is, there are not nested modular programs in~$\cM$.
The \emph{flattening} of $\mp$ is defined as $\flatt{\mp} \eqdef \tuple{\cS,\modules(\mp)}$.
For example, 
$\flatt{\MP{1}(E)} = \tuple{\set{\it in},\set{ \M{1},\allowbreak \M{2},\allowbreak \M{3},\allowbreak \eqref{eq:hcmod4} ,\allowbreak \eqref{eq:hcmod2.1},\allowbreak \eqref{eq:pe} }}$.
%where we recursively replace each subprogram by its \definitions.
%
We say that $\mp$ is in $\alpha$-\emph{normal form} (\ANF) if all occurrences of a predicate name in~$\Phi(\mp)$ are free or they are all bound to a unique occurrence of existential quantifier. 
This happens, for instance, in~$\MP{1}(E)$.
Still, any formula $\Phi(\mp)$ can always be equivalently reduced to $\alpha$-NF by applying so-called $\alpha$-transformations (i.e., choosing new names for quantified variables).
In our context, this means changing hidden (auxiliary) predicate names in~$P$ until  \emph{different} symbols are used for distinct auxiliary predicates.
%
% Consider, modular program $P'_c$ that differs from $P_c$ in renaming its predicate symbol $r$ by $edge$; and
% modular program ${\MP{1}'}$ that differs from $\MP{1}$ by replacing module $P_c$ by $P'_c$. Program ${\MP{1}'}$
%  is not in \ANF{}. Yet,  it is 
%  equivalent to ${\MP{1}}$, since predicate symbol $edge$  is hidden within its submodule $P'_c$.
%
The next theorem states that, when modular program $\mp$ is in \ANF{}, we can ignore its recursive structure and instead consider its flat version.

\begin{comment}
\pedro{We can introduce the definition of \flattable in the proof of the theorem (perhaps as a lemma before, proving its relation to \ANF). But I found it too complicated for the main text.}
\emph{\flattable} if any modular program $\mp' \in \cM$ with $\mp' = \tuple{\cS',\cM'}$ satisfies the following conditions: 
\begin{enumerate}
\item 
predicate symbols occurring both in $\cS$ and $\mp'$ should also occur in $\cS'$,
    \label{item:flattable.2}
\item for any element (\definition or modular program) $M \in \cM$ different from  $\mp'$, 
if a predicate symbol~$p$ occurs in both $\mp'$ and $M$,
then $p \in \cS'$, and
    \label{item:flattable.1}

\item $\mp'$ is \flattable.
\end{enumerate}
\end{comment}

\begin{proposition}\label{thm:flatten}
For any modular program~$\mp$ in \ANF{}, an interpretation  $\cI$ is a model of~$\mp$ iff $\cI$ is a model of $\flatt{\mp}$.
\end{proposition}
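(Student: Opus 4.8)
The plan is to prove the stronger statement that the second-order formulas $\Phi(\mp)$ and $\Phi(\flatt{\mp})$ are logically equivalent whenever $\mp$ is in \ANF. Since $\cI$ is a model of a modular program exactly when it satisfies the associated formula, this at once yields the claimed coincidence of models. The engine of the argument is the familiar prenex law $(\exists \vec{X}\, A) \wedge B \equiv \exists \vec{X}\,(A \wedge B)$, which is valid precisely when no predicate variable in $\vec{X}$ occurs free in $B$. The role of the \ANF hypothesis is exactly to supply this side condition: if every predicate name is either free throughout $\Phi(\mp)$ or bound to a single occurrence of an existential quantifier, then the variables introduced by an inner $\exists$ cannot clash with predicate symbols occurring free in a sibling conjunct, so each inner existential prefix may be pulled to the front without capture.

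I would proceed by structural induction on the module hierarchy (equivalently, on the tree of $\mp$). In the base case $\mp$ is flat, so $\modules(\mp) = \cM$ and $\flatt{\mp} = \mp$, making the statement trivial. For the inductive step, write $\cM$ as the disjoint union of a set of \definitions $D_1,\dots,D_k$ and a set of subprograms $\mp_1,\dots,\mp_l$, so that
\[
\Phi(\mp) = \exists \vec{h}\,\Big(\bigwedge_i \Phi(D_i) \wedge \bigwedge_j \Phi(\mp_j)\Big).
\]
Each $\mp_j$ inherits \ANF from $\mp$ (the naming discipline restricts to any subformula), so the induction hypothesis applies and lets me replace $\Phi(\mp_j)$ by the flat form $\Phi(\flatt{\mp_j}) = \exists \vec{h}_j \bigwedge_{M \in \modules(\mp_j)} \Phi(M)$. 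Applying the prenex law repeatedly, legitimate by \ANF, then pulls $\vec{h}_1,\dots,\vec{h}_l$ out past the remaining conjuncts and merges them with $\vec{h}$ into a single existential prefix over $\bigwedge_{M \in \modules(\mp)} \Phi(M)$, using that $\modules(\mp) = \{D_1,\dots,D_k\} \cup \bigcup_j \modules(\mp_j)$.

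It then remains to check that the prefix obtained this way is exactly the one prescribed by the definition of $\flatt{\mp}$: the set of predicate symbols that occur free in some leaf $\Phi(M)$ yet lie outside $\cS$. This is the step I expect to require the most care. Tracing a predicate $p$ up the tree, \ANF guarantees that $p$ is bound by at most one quantifier, and one verifies that $p$ is existentially quantified at some level of the recursion iff $p$ occurs free in some leaf \definition and $p \notin \cS$ — here one uses that the free predicate symbols of a subprogram's formula are exactly its public predicates, so a predicate is hidden at the first ancestor level at which it is no longer public. Hence the merged prefix coincides with the hidden-predicate tuple of $\flatt{\mp}$, and the rewritten formula is literally $\Phi(\flatt{\mp})$. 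The main obstacle is thus entirely bookkeeping, namely making the prenex side conditions and the matching of hidden predicates precise; both are delivered by \ANF, and indeed without it the quantifier pull-out could capture a free predicate and the equivalence would break down.
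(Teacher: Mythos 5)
Your proof follows essentially the same route that the paper itself takes: Proposition~\ref{thm:flatten} is never proved in isolation there, but the identical computation (induction over the module hierarchy, pulling inner existential prefixes outward, and merging them into a single prefix) is carried out inside the appendix proof of Theorem~\ref{thm:sm-traditional2.hidden}, under the stronger hypothesis of coherence. Your handling of the capture problem is correct: \ANF{} does guarantee that a name bound inside one $\Phi(\mp_j)$ occurs nowhere in a sibling conjunct, so the prenex moves are sound, and \ANF{} is indeed inherited by subprograms as you claim.

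The genuine gap is the final bookkeeping biconditional, which you assert rather than prove: ``$p$ is existentially quantified at some level of the recursion iff $p$ occurs free in some leaf \definition and $p \notin \cS$.'' The ``only if'' half does not follow from \ANF{} and is false in general, because \ANF{} constrains only occurrences of names inside the formula $\Phi(\mp)$; it says nothing about how the inner public sets $\cS_j$ relate to the outer set $\cS$. Concretely, let \mbox{$\mp = \tuple{\set{p},\set{\mp'}}$} with \mbox{$\mp'=\tuple{\emptyset,\set{(p:p(a))}}$}. All occurrences of $p$ in $\Phi(\mp)$ are bound to the unique inner quantifier, so $\mp$ is in \ANF{} as literally defined; yet $\Phi(\mp)=\exists p\,\SM_p[p(a)]$ is satisfied by every interpretation, while $\Phi(\flatt{\mp})=\SM_p[p(a)]$ forces the extent of $p$ to be exactly $\set{a}$. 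Here $p$ is hidden inside $\mp'$ but republished at the top level, the merged prefix binds $p$ whereas the flat prefix does not, and the proposition itself fails --- so the verification you defer cannot be completed from \ANF{} alone. What is missing is a hypothesis forbidding a predicate hidden in a subprogram from reappearing in an outer public set; this is precisely condition~(1) of the ``flattable'' notion the authors drafted but left commented out in the source (any predicate occurring both in $\cS$ and in a subprogram must be public in that subprogram), or equivalently a reading of \ANF{} under which bound names must also be disjoint from $\cS$. You are in good company: the paper's proof of Theorem~\ref{thm:sm-traditional2.hidden} asserts without justification the same prefix identity --- that every symbol in the inner hidden tuples also occurs in the outer hidden tuple --- and it breaks on this same example, which is even coherent. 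Once the extra hypothesis is added, your induction and prenex argument go through, and the prefix matching becomes routine.
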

%
%Note that, in general, not every program is \flattable.
%
%In particular, this is not the case when the same predicate symbol is used locally in two different modules.
%
%Of course, the particular symbol that are used locally are not relevant for the semantics of a modular program.
%
%Hence,we can replace any local predicate symbol by a different predicate symbol and, thus, transform any non-\flattable a program into an equivalent \flattable one.

%{\color{green} YU: sections up to now fully reviewed. Few minor comments left for Jorge;
%Jorge, btw do you remember which problem Pedro suggested to model in addition to HC?
%} 

Thus, since~$\MP{1}(E)$ is in \ANF{}, it is simply equivalent to~$\flatt{\MP{1}(E)}$.
Next, we show  how to relate a flat program with a non-modular program that contains exactly the same rules. 
We use this result to verify that program $\fh$ (corresponding to Listing~\ref{list:hamiltonian}) satisfies the \hc specification.
To formalize this relation we  focus on a syntax closer to one of logic programs. Consider FO formulas that are conjunctions of rules of the form:
% For HL-modular programs, \cite{har16} showed that there exists such a correspondence between the so called \emph{coherent} HL-modular programs and non-modular programs.
% In the rest of this section, we extend this correspondence to general case of modular programs defined here.
\beq
\ba{r}
\widetilde{\forall} (a_{k+1} \land \dots \land a_l \land \neg a_{l+1} \land \dots 
\land \neg a_m \,\land \neg \neg a_{m+1} \land \dots \land \neg \neg a_n \rar
a_1 
\lor \dots \lor a_k), 
\ea
\eeq{eq:rule_fo}
where all $a_i$ are atoms and $\widetilde{\forall}$ stands for the universal closure of their variables.
As usual, the consequent and the antecedent of~\eqref{eq:rule_fo} are  called {\em head} and  {\em body}, respectively.
The conjunction $a_{k+1} \land \dots \land a_{l}$ constitutes the {\em positive (part of the) body}.
A modular program~$\mp$ is called \emph{simple} if, for every \definition 
$(\vec{p} : F) \in \modules(\mp)$, formula~$F$ is a conjunction of rules, and all head predicate symbols in $F$ are intensional (occur in~$\bf p$).
This is, in fact, the case of all \definitions we considered so far.
%For the remainder, we study the subclass of simple modular programs that help us connect  logic programs used in practice with a modular framework proposed here.
Let $\intens(\mp)$ collect all intensional predicates in $\mp$, that is $\intens(\mp) \eqdef \bigcup \{ \vec{p} \mid (\vec{p} : F) \in \modules(P)\}$.
Then, we form the directed \emph{dependency graph} $DG[\mp]$ by taking $\intens(\mp)$ as nodes and adding an edge $(p,q)$ each time there is a rule occurring in $\mp$ with $p$ in the head and $q$ is in the positive body. 
For instance, the dependency graph of program~$\MP{1}(E)$ is given in Figure~\ref{fig:example_dep}.
This graph has four strongly connected components, each one consisting of a single node.
%(In the sequel, we refer a dependency graph of a logic program. A logic program $F$ can be identified with modular program $\tuple{pred(F), \set{(pred(F):F)}}$.)

%
A modular program~$\mp$ is 
{\em coherent} if it is simple, in \ANF{}, and satisfies two more conditions: 
(i)  every pair of  distinct \definitions
 $(\vec{p} : F)$
 and 
\mbox{$(\vec{p'} : F')$}  in $\modules(\mp)$ is such that \mbox{$\vec{p} \cap \vec{p'} = \emptyset$}, 
and (ii)
 for every strongly connected component ${\tt SCC}$ in $DG[\mp]$, 
there is a \definition \hbox{$(\vec{p} : F) \in \modules(\mp)$} such that $\vec{p}$ contains 
all vertices in~${\tt SCC}$.
%\end{enumerate}
For example, $\MP{1}(E)$ is a coherent program.
Now, let us collect the conjunction of all \definition formulas in $\mp$ as
$\formula(\mp) \eqdef \bigwedge \{F \mid (\vec{p} : F) \in \modules(\mp)\}.$
We can observe, for instance, that
$\fh \wedge \formula(M_E) = \formula(\MP{1}(E)) \wedge  \eqref{eq:hcmod2.2}  = \formula(\flatt{\MP{1}(E)}) \wedge  \eqref{eq:hcmod2.2} $, that is, the modular encoding of the \hc problem and the non-modular one share the same rules but for~\eqref{eq:hcmod2.2}.
We now obtain a similar result to Theorem~3 in~\citep{har16} but
extended to our framework.
Together with Proposition~\ref{thm:sm-traditional}, it connects modular programs with logic programs as used in practice.
The proof of this result together with the proof of Theorem~2 below can be found in~\ref{sec:proofs}.

\begin{theorem}\label{thm:sm-traditional2.hidden}
Let $\mp = \tuple{\cS,\cM}$ be a coherent modular program, $\vec{p}$ be  $\intens(\mp)$, and $\vec{h}$ be $pred(\Phi(\mp))\setminus \cS$.
Then, (i) any interpretation~$\cI$ is a model of $\mp$ iff $\cI$ is a model of formula $\exists \bf h$ $\SM_\vec{p}[\formula(\mp)]$; (ii)
any  interpretation~$\cI$ is an answer set of $\mp$ iff
 there is some answer set $\cJ$ of~$\formula(\mp)$ such that $\cI=\restr{\cJ}{\cS}$.
\end{theorem}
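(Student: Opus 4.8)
The plan is to prove part~(i) directly and then obtain part~(ii) as a consequence of~(i) together with Proposition~\ref{thm:sm-traditional}. Since $\mp$ is coherent, it is in particular in \ANF{}, so Proposition~\ref{thm:flatten} lets me replace $\mp$ by its flattening at the outset: $\cI$ is a model of $\mp$ iff $\cI$ is a model of $\flatt{\mp}=\tuple{\cS,\modules(\mp)}$. By the definition of $\Phi$ on a flat program, $\Phi(\flatt{\mp})$ is $\exists \vec{h}\,\bigwedge\{\SM_{\vec{p}_M}[F_M] \mid M=(\vec{p}_M:F_M)\in\modules(\mp)\}$, whereas the target formula is $\exists \vec{h}\,\SM_{\vec{p}}[\formula(\mp)]$ with $\vec{p}=\intens(\mp)=\bigcup_M \vec{p}_M$ and $\formula(\mp)=\bigwedge_M F_M$. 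Because $\mp$ is in \ANF{} (each hidden predicate is bound by a unique existential quantifier), the two outer existential prefixes agree, so it suffices to establish, as a second-order equivalence holding under every interpretation of the extensional symbols, the matrix identity
\[
\bigwedge_{M\in\modules(\mp)} \SM_{\vec{p}_M}[F_M] \ \equiv \ \SM_{\vec{p}}[\formula(\mp)].
\]

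This matrix equivalence is the heart of the argument, and it is exactly the conclusion of a splitting theorem for the $\SM$ operator; the role of the coherence conditions is precisely to supply its hypotheses. Coherence condition~(i), that distinct \definitions carry disjoint tuples of intensional predicates, makes $\{\vec{p}_M\}_M$ a genuine partition of $\vec{p}$. Simplicity guarantees that each $F_M$ is a conjunction of rules whose head predicate symbols all lie in $\vec{p}_M$, so that each conjunctive part ``defines'' only its own intensional predicates. Finally, coherence condition~(ii), that every strongly connected component of $DG[\mp]$ is contained in the intensional predicates of a single \definition, is the component condition forbidding positive recursion from crossing the partition. Under these three conditions the splitting theorem (in the spirit of the proof of Theorem~3 in~\citep{har16}) yields the displayed equivalence, and hence $\Phi(\flatt{\mp})\equiv\exists\vec{h}\,\SM_{\vec{p}}[\formula(\mp)]$, which proves~(i).

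For part~(ii), recall that an answer set is a Herbrand model over the public signature $\cS$. By~(i), $\cI$ is an answer set of $\mp$ iff it is a Herbrand model of $\exists\vec{h}\,\SM_{\vec{p}}[\formula(\mp)]$, which is precisely formula~\eqref{f:hSM} instantiated at $F=\formula(\mp)$ with hidden tuple $\vec{h}=pred(\Phi(\mp))\setminus\cS$. Applying Proposition~\ref{thm:sm-traditional} to $F=\formula(\mp)$ then characterizes these Herbrand models as exactly the restrictions $\restr{\cJ}{\cS}$ of answer sets $\cJ$ of $\formula(\mp)$, which is the desired statement. The one bookkeeping point requiring care is the reconciliation of the intensional tuple $\vec{p}=\intens(\mp)$ used here with the full predicate set of $\formula(\mp)$ entering the notion of ``answer set of $\formula(\mp)$''; this is handled by absorbing the remaining extensional, non-public symbols into $\vec{h}$, so that $\cS$ is exactly the surviving free signature.

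The main obstacle is the matrix equivalence: one must state the splitting theorem in a form applicable to an arbitrary partition induced by $\modules(\mp)$ and then verify that coherence genuinely instantiates each hypothesis — the partition of intensional predicates from condition~(i), head-containment from simplicity, and the single-component condition from $DG[\mp]$ in condition~(ii) — while checking that the interleaving extensional and hidden predicates do not interfere with the decomposition. Once this is in place, the flattening step and Proposition~\ref{thm:sm-traditional} are routine invocations of results already established in the excerpt.
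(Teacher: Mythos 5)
Your proof is correct, and its core coincides with the paper's: both reduce the theorem to (a) a flattening step that eliminates nesting and (b) a splitting argument showing $\bigwedge_M \SM_{\vec{p}_M}[F_M] \equiv \SM_{\intens(\mp)}[\formula(\mp)]$, with the coherence conditions supplying exactly the splitting hypotheses (disjointness of the intensional tuples, head-containment from simplicity, and the SCC condition for $DG[\mp]$). The difference is in how each half is discharged. For (a), you invoke Proposition~\ref{thm:flatten} directly (coherent $\Rightarrow$ \ANF{} $\Rightarrow$ models of $\mp$ and $\flatt{\mp}$ coincide); the paper instead proves this inline, by induction on the nesting degree of $\mp$, pulling the inner existential prefixes $\exists\vec{h}'$ outward and using coherence to guarantee that no predicate-name capture occurs --- in effect an embedded proof of the flattening claim restricted to coherent programs. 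Your version is cleaner and not circular (Proposition~\ref{thm:flatten} precedes the theorem and rests only on $\alpha$-renaming/prenexing, not on Theorem~\ref{thm:sm-traditional2.hidden}), but it leans on a proposition the paper states without proof, whereas the paper's argument is self-contained. For (b), you apply an $n$-ary splitting theorem ``in one shot'' over the partition induced by $\modules(\mp)$; the paper's Lemma~\ref{thm:sm-traditional2b} obtains the same equivalence by induction on the number of \definitions, peeling off one module at a time with the binary Splitting Theorem of \cite{felelipa09a} --- so the induction you defer to ``a form applicable to an arbitrary partition'' is precisely what the paper carries out. Finally, both your proof and the paper's derive (ii) from Proposition~\ref{thm:sm-traditional}; your explicit flag about reconciling $\intens(\mp)$ with $pred(\formula(\mp))$ is a point the paper glosses over silently, and it is only harmless when every predicate of $\formula(\mp)$ is intensional in some \definition, as in the paper's examples.
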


As a result, we can now prove that program in Listing~\ref{list:hamiltonian}  satisfies the formal specification~$\MP{1}$ which, as we saw, captures the \hc problem.

\begin{proposition}
The answer sets of modular program~$\MP{1}(E)$
coincide with the answer sets of logic program $\fh \wedge\formula(M_{E})$ for intensional predicate symbol `{\rm \it in}' hiding all other predicate symbols of the program.

%Modular program~$\MP{1}(E)$ has the same answer sets than~$\exists\vec{h} \text{SM}_\vec{p} [\fh \wedge\formula(P_E)]$
%for for any graph $G$ with edges $E$ (mentioning all vertices),
%where
\end{proposition}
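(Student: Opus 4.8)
The plan is to express both sides of the claimed coincidence as restrictions to $\set{\mathit{in}}$ of ordinary (all-intensional) answer sets, and then to reduce the whole statement to the redundancy of the single rule~\eqref{eq:hcmod2.2}. First I would invoke Theorem~\ref{thm:sm-traditional2.hidden}(ii): since $\MP{1}(E)$ is coherent with public predicates $\cS=\set{\mathit{in}}$, an interpretation $\cI$ is an answer set of $\MP{1}(E)$ iff $\cI=\restr{\cJ}{\set{\mathit{in}}}$ for some answer set $\cJ$ of the plain formula $\formula(\MP{1}(E))$. On the other side, I would apply Proposition~\ref{thm:sm-traditional} to $F=\fh\wedge\formula(M_E)$ with $\vec{h}=\set{\mathit{edge},\mathit{vertex},\mathit{r}}$ and $\cS=\set{\mathit{in}}$, which yields that the answer sets of $F$ for $\mathit{in}$ hiding the remaining predicates are exactly the restrictions $\restr{\cJ'}{\set{\mathit{in}}}$ of the answer sets $\cJ'$ of $F$. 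Hence it suffices to show that $\formula(\MP{1}(E))$ and $\fh\wedge\formula(M_E)$ possess the very same (ordinary) answer sets over the full signature.

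Next I would exploit the identity already noted in the text, $\fh\wedge\formula(M_E)=\formula(\MP{1}(E))\wedge\eqref{eq:hcmod2.2}$, so that the two formulas differ only in the constraint~\eqref{eq:hcmod2.2}, whose head is $\bot$. Writing $\vec{p}$ for the tuple of all predicates, the theorem on constraints for the $\SM$ operator lets me pull this conjunct outside the operator, i.e.\ $\SM_\vec{p}[\formula(\MP{1}(E))\wedge\eqref{eq:hcmod2.2}]$ is equivalent to $\SM_\vec{p}[\formula(\MP{1}(E))]\wedge\eqref{eq:hcmod2.2}$. Consequently the answer sets of $\fh\wedge\formula(M_E)$ are precisely those answer sets of $\formula(\MP{1}(E))$ that additionally satisfy~\eqref{eq:hcmod2.2}.

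The remaining and central step --- which I expect to be the main obstacle --- is to prove that \emph{every} answer set $\cJ$ of $\formula(\MP{1}(E))$ already satisfies~\eqref{eq:hcmod2.2}, so that conjoining it removes no models. Here I would lean on the correctness already established for the specification: by Theorem~\ref{thm:sm-traditional2.hidden}(ii) the restriction $\restr{\cJ}{\set{\mathit{in}}}$ is an answer set of $\MP{1}(E)$, and by Proposition~\ref{prop:correct1} its $\mathit{in}$ extent forms a \hc; in a \hc every vertex has exactly one incoming edge, so $\cJ$ satisfies the incoming-degree constraint~\eqref{eq:hcmod2.2}. (Degenerate instances with no cycle leave $\mathit{in}$ empty and satisfy~\eqref{eq:hcmod2.2} vacuously.) Alternatively, staying at the flat level, I could derive the same fact from Proposition~\ref{prop:transitive.closure} together with the connectivity constraint~\eqref{eq:hcmod4} and the out-degree bound~\eqref{eq:hcmod2.1}, by a counting argument as in the proof of Proposition~\ref{prop:connectivityb}.

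With this last step in place, the two formulas have identical answer sets, and restricting both to $\set{\mathit{in}}$ gives the desired coincidence, completing~\ref{item:methodology.4} for this encoding. The delicate points to get right are the exact bookkeeping of which predicates are intensional versus hidden in the two appeals to Theorem~\ref{thm:sm-traditional2.hidden} and Proposition~\ref{thm:sm-traditional} (in particular that $\vec{p}$ coincides with $pred(\fh\wedge\formula(M_E))$ in both), and the precise statement of the constraint theorem that licenses moving~\eqref{eq:hcmod2.2} outside the scope of $\SM_\vec{p}$.
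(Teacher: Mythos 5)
Your proof is correct and takes essentially the same route as the paper's: both arguments reduce the claim, via Theorem~\ref{thm:sm-traditional2.hidden} applied to the coherent program~$\MP{1}(E)$ together with Proposition~\ref{thm:sm-traditional}, to the observation that $\fh\wedge\formula(M_E)$ and $\formula(\MP{1}(E))$ differ only by the denial~\eqref{eq:hcmod2.2}, which can be moved outside the $\SM$ operator and is redundant. The only divergence is how that redundancy is justified: the paper tersely asserts that $\formula(\MP{1})$ entails~\eqref{eq:hcmod2.2} (loose as literally stated, since the plain formula does not classically entail it --- only its stable models satisfy it), whereas you derive it from Proposition~\ref{prop:correct1} (every answer set's $\mathit{in}$ extent is a Hamiltonian cycle, hence has in-degree one, with the empty-graph case handled separately), so your treatment of this step is in fact more careful than the paper's own.
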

\begin{proof}
Since modular program~$\MP{1}(E)$ is coherent, by Theorem~\ref{thm:sm-traditional2.hidden}, it is equivalent to the formula
to~\mbox{$\varphi := \exists\vec{h} \; \text{SM}_\vec{p}[\formula(\MP{1}) \wedge \formula(M_E)]$} where~\mbox{$\vec{h}=\tuple{vertex,edge,r}$}.
Now, $\varphi$ is in its turn equivalent 
to \mbox{$\exists\vec{h} \text{SM}_\vec{p}[\formula(\MP{1}) \wedge \eqref{eq:hcmod2.2} \wedge \formula(M_E) ]$}
since \mbox{$\formula(\MP{1})$} entails formula~\eqref{eq:hcmod2.2} and~\definition \eqref{eq:hcmod2.2} has no intensional predicate symbols.
Besides, formulas \mbox{$\formula(\MP{1}) \wedge \eqref{eq:hcmod2.2}$} and $\fh$ are identical.
From Proposition~\ref{thm:sm-traditional}, it follows that
 Herbrand models of~$\varphi$ are the answer sets of $\fh \wedge \formula(M_E)$ restricted to predicate~$in$.
\end{proof}

%As stated at the begining of the section,
%this result show that program in Listing~\ref{list:hamiltonian} complies with the formal specification~$\MP{1}$.
%
This proposition constitutes  verification \ref{item:methodology.4} that links an ASP encoding~$\fh$ of the \hc problem to its formal specification as a modular program~$\MP{1}$.
%bmn 
% Proposition~\ref{prop:correct1} presented the claim of the correspondence between formal specification $\MP{1}$ and the informal description ofhjg the \hc problem.
%Verifying the correctness of  program in Listing~\ref{list:hamiltonian} is a singular case since it contains almost the same rules as its formal specification modular program~$\MP{1}$.
%
At a first sight, the effort may seem worthless, given that $\MP{1}$ and $\fh$ almost share the same rules.
But this is a wrong impression, since $\fh$ is actually an ideal case, i.e. the one closest to $\MP{1}$, while the latter can still be used as a specification for other encodings.
To show how, let us take another encoding $\fh'$ that results from replacing~\eqref{eq:hcr1}-\eqref{eq:hcmod4} in $\fh$ by rules in Listing~\ref{list:hamiltonian2} respectively corresponding to:
\begin{align}
&\forall y\big(\inp(a,y) \rar \rap(y)\big)
\label{eq:hcmod1.1a}
\\
&\forall x y \big(\inp(x,y) \wedge \rap(x)\rar \rap(y)\big) 
\label{eq:hcmod3.2b}
\\
&\forall y \big(\neg {\it ra}(y)\wedge {\it vertex}(y) \rar \bot\big)
\label{eq:hcmod4.2a}
\end{align}
Verifying program $\fh'$ amounts to proving its adherence to~$\MP{1}$ and, for that purpose, requires a proper modularization $\MP{1}'$ of~$\fh'$.
In this case, that modularization is obvious since the change is \emph{local} to the module checking \hcs, $\MP{hc}$.
We define the modular programs $\MP{cn}' := \tuple{\set{\it vertex, in},\set{ (ra: \eqref{eq:hcmod1.1a}\wedge\eqref{eq:hcmod3.2b}),\eqref{eq:hcmod4.2a} }}$,  $\MP{hc}' := \tuple{\set{\it vertex, in},\set{ \MP{cn}', \eqref{eq:hcmod2.1}, \eqref{eq:hcmod2.2} }}$ and $\MP{1}'$, as the result of replacing $\MP{hc}$ by $\MP{hc}'$ in $\MP{1}$.
%\begin{align}
%&\begin{aligned}
%\big( {\it \rap} & : & \forall y\big(\inp(a,y) \rar \rap(y)\big) \,\wedge%\\
%%&
%\forall x y \big(\inp(x,y) \wedge \rap(x)\rar \rap(y)\big) &  \big)\\
%\end{aligned}
% \label{eq:hcmod1.1}
%\label{eq:hcmod3.2}
%\\
%&\forall y \big(\neg {\it ra}(y)\wedge {\it vertex}(y) \rar \bot\big)
%\label{eq:hcmod4.2}
%\end{align}
Even though they use different auxiliary predicates, programs $\MP{hc}$ and $\MP{hc}'$ have the same intuitive meaning (Statement \stat{hc}) as long as there exists some vertex $a$ in the graph. 
%:the extent of~$in$ forms a strongly connected graph covering all vertices.
One would, therefore, expect that the correctness of $\MP{hc}'$ could be proved by checking some kind of equivalence with respect to $\MP{hc}$.
We formalize next this idea.

%\paragraph{Equivalent modular programs.}
Given modular programs~$\mp,\mp_1$ and $\mp_2$, we write $\mp[\mp_1/\mp_2]$ to denote the result of replacing all occurrences of module~$\mp_1$ in~$\mp$ by $\mp_2$.
We also define $\Phi(\mp-\mp_1) \eqdef \bigwedge \{\Phi(M) \mid M \in \cM, \mp_1 \not\in \modules(M) \} $.
For any finite theory $\Gamma$, two modular programs~$\mp_1$ and~$\mp_2$ are said to be \emph{strongly equivalent with respect to context~$\Gamma$} when any modular program~$\mp$ with $\Phi(\mp - \mp_1) \models \Gamma$
satisfies that $\mp$ and~$\mp[\mp_1/\mp_2]$ have the same answer sets.
%
%We say that ~$\mp_1$ and~$\mp_2$ are just \emph{strongly equivalent} when they are strongly equivalent
%under the empty context~$\emptyset$.

%Formally, given modular programs~$\mp,\mp'$ and $\mp''$, we write $\mp''[\mp/\mp']$ to denote the result of replacing all occurrences of the module~$\mp$ in~$\mp''$ by $\mp'$.
%
%Furthermore, assuming that $\mp''$ is of the form $\tuple{\cS'',\cM''}$, we also define $\modules(\mp'' - \mp) \eqdef \set{ M \in \cM'' \mid \mp \notin \modules(M) }$
%
%We say that two modular programs~$\mp$ and~$\mp'$ are \emph{strongly equivalent with respect to some context~$\Gamma$} when any modular program~$\mp''$ with  $\bigwedge \modules(\mp'' - \mp) \models \Gamma$ satisfies that $\mp''$ and~$\mp''[\mp/\mp']$ have the same answer sets.
%
%We say that ~$\mp$ and~$\mp'$ are \emph{strongly equivalent} when they are strongly equivalent with respect to~$\emptyset$.

\begin{theorem}
\label{thm:modular.equivalence}
Two modular programs~$\mp$ and~$\mp'$ are strongly equivalent under context~$\Gamma$ iff
\mbox{$\Gamma \models \Phi(\mp) \leftrightarrow \Phi(\mp')$} holds for all Herbrand interpretations.
\end{theorem}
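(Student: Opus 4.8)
The plan is to prove the two directions separately, writing $R$ for the arbitrary \emph{context} modular program that appears in the definition of strong equivalence (so that $R[\mp/\mp']$ is the result of replacing $\mp$ by $\mp'$ inside $R$), and recalling that an answer set of a modular program is a Herbrand model over its public predicates of its associated formula $\Phi(\cdot)$. Throughout I assume, as is implicit in the statement, that $\mp$ and $\mp'$ expose the same public interface, i.e.\ $pred(\Phi(\mp))=pred(\Phi(\mp'))$; this guarantees that the replacement does not alter which predicates are hidden at any enclosing level.

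For the \emph{if} direction, assume $\Gamma\models\Phi(\mp)\leftrightarrow\Phi(\mp')$ over all Herbrand interpretations and let $R$ be any context with $\Phi(R-\mp)\models\Gamma$. By the recursive definition of $\Phi$, at the level where $\mp$ sits as a direct submodule of $R$ the formula $\Phi(R)$ has the shape $\exists\vec{h}\,\big(\Phi(\mp)\wedge\Phi(R-\mp)\big)$, where $\vec{h}$ are the predicates hidden at that level, while $\Phi(R[\mp/\mp'])$ has the shape $\exists\vec{h}\,\big(\Phi(\mp')\wedge\Phi(R-\mp)\big)$ with the \emph{same} $\vec{h}$ (by the shared-interface assumption). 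The key step is a replacement-of-equivalents argument carried out \emph{inside} the scope of $\exists\vec{h}$: fix any Herbrand interpretation together with any assignment of relations to the predicate variables $\vec{h}$; if under this assignment $\Phi(R-\mp)$ holds then $\Gamma$ holds, hence $\Phi(\mp)\leftrightarrow\Phi(\mp')$ holds, so the conjunctions $\Phi(\mp)\wedge\Phi(R-\mp)$ and $\Phi(\mp')\wedge\Phi(R-\mp)$ take the same truth value; and if $\Phi(R-\mp)$ fails both conjunctions fail. Since this holds for every value of $\vec{h}$, the two existential formulas are satisfied by exactly the same Herbrand interpretations, so $\Phi(R)$ and $\Phi(R[\mp/\mp'])$ have the same Herbrand models and hence the same answer sets. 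Nested or repeated occurrences of $\mp$ are handled by iterating this observation along the module tree.

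For the \emph{only if} direction I build a single witnessing context. Let $\gamma\eqdef\bigwedge\Gamma$ (finite, so this is a formula), let $\cS$ contain every predicate occurring in $\Phi(\mp)$, in $\Phi(\mp')$ or in $\gamma$, and set $R\eqdef\tuple{\cS,\set{\mp,\ \defm{\emptyset}{\gamma}}}$. The extensional def-module $\defm{\emptyset}{\gamma}$ contributes exactly $\gamma$ (when the intensional tuple is empty, $\Phi$ is the formula itself and nothing is minimized) and does not contain $\mp$, so $\Phi(R-\mp)=\gamma\models\Gamma$ and the hypothesis of strong equivalence applies. Because $\cS$ hides nothing at the top level, $\Phi(R)=\Phi(\mp)\wedge\gamma$ and $\Phi(R[\mp/\mp'])=\Phi(\mp')\wedge\gamma$, and since every predicate is public the answer sets of $R$ and of $R[\mp/\mp']$ are precisely the Herbrand models of these two formulas. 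Strong equivalence w.r.t.\ $\Gamma$ forces these model sets to coincide, i.e.\ a Herbrand interpretation satisfies $\Phi(\mp)\wedge\Gamma$ iff it satisfies $\Phi(\mp')\wedge\Gamma$, which is exactly $\Gamma\models\Phi(\mp)\leftrightarrow\Phi(\mp')$.

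I expect the main obstacle to be the replacement step in the \emph{if} direction, namely justifying the substitution of $\Phi(\mp)$ by $\Phi(\mp')$ while it lies within the scope of the second-order existential quantifiers that hide local predicates. The point to get right is that quantification over those predicate variables ranges over arbitrary relations on the Herbrand universe, which is precisely the range over which the hypothesis ``$\Gamma\models\Phi(\mp)\leftrightarrow\Phi(\mp')$ for all Herbrand interpretations'' already quantifies; combined with the fact that the context conjunct $\Phi(R-\mp)$ supplies $\Gamma$ within that same scope, the conditional equivalence applies pointwise. Care is also needed to confirm that the shared-interface assumption keeps the hidden tuple $\vec{h}$ literally identical before and after the replacement, so that the two existential prefixes match.
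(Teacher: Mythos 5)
Your \emph{only if} direction is essentially the paper's own proof: the paper likewise builds the witnessing context by placing $\mp$ alongside the single formula-module $\bigwedge\Gamma$ with every predicate public, so that answer sets are exactly the Herbrand models of $\Phi(\mp)\wedge\bigwedge\Gamma$; it phrases the argument as a proof by contradiction where you argue directly, which is an inessential difference. For the converse the paper gives no argument at all (it is dismissed as ``immediate''), so your attempt to actually prove it is where the comparison matters --- and that is where there is a genuine gap.

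Your replacement-of-equivalents argument is sound at the level where $\mp$ is a \emph{direct} submodule of the context $R$, because there $\Phi(R-\mp)$ and $\Phi(\mp)$ are conjuncts under the \emph{same} existential prefix, so any assignment to the hidden tuple that satisfies $\Phi(R-\mp)$ also supplies $\Gamma$ inside that scope. But the closing sentence ``nested or repeated occurrences of $\mp$ are handled by iterating this observation along the module tree'' does not go through: by definition $\Phi(R-\mp)$ collects only the \emph{top-level} modules of~$R$ not containing $\mp$, and it lies \emph{outside} every quantifier introduced by an intermediate module enclosing~$\mp$. If such an intermediate module hides a predicate occurring in $\Gamma$, that symbol is re-bound inside the scope, the truth of $\Gamma$ established outside does not transfer in, and the hypothesis $\Gamma\models\Phi(\mp)\leftrightarrow\Phi(\mp')$ cannot be applied pointwise there. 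Concretely (with nullary predicates $p,q$), let $\mp=\defm{\emptyset}{q\to p}$, $\mp'=\defm{\emptyset}{p\wedge(q\vee\neg q)}$ and $\Gamma=\set{q}$; your interface condition holds and $\Gamma\models\Phi(\mp)\leftrightarrow\Phi(\mp')$. Now take the context
\[
R \ = \ \tuple{\set{p,q},\ \set{\ \tuple{\set{p},\set{\mp}},\ \defm{\emptyset}{q}\ }},
\]
whose middle layer hides $q$. Then $\Phi(R-\mp)=q\models\Gamma$, yet $\Phi(R)=\exists Q(Q\to p)\wedge q\equiv q$ while $\Phi(R[\mp/\mp'])=\exists Q\big(p\wedge(Q\vee\neg Q)\big)\wedge q\equiv p\wedge q$, so $R$ and $R[\mp/\mp']$ have different answer sets even though the hidden tuples before and after replacement are identical. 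Hence the direction you are proving actually \emph{fails} for such nested contexts, and no iteration can rescue it; it requires an additional hypothesis that neither you nor the paper makes explicit (for instance, that $R$ and $R[\mp/\mp']$ remain in \ANF{}, so that no intermediate quantifier captures a predicate of $\Gamma$ or of the shared interface). Your flat-level argument and your interface caveat are correct as far as they go, but the proof of the \emph{if} direction is incomplete precisely where the real difficulty sits.
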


% \begin{proof}
% Let $\mp'' = \tuple{\cS,\cM}$ be a modular program such that
% $\bigwedge \modules(\mp'' - \mp) \models \Gamma$.
% Note that, by definition,
% \mbox{$\Phi(\mp'') \leftrightarrow \Phi(\mp''[\mp/\mp'])$}
% implies that 
% $\Phi(\mp'')$
% and
% $\Phi(\mp''[\mp/\mp'])$
% have the same models
% and, thus,
% that~$\Phi(\mp'')$ and~$\Phi(\mp''[\mp/\mp'])$
% are strongly equivalent with respect to~$\Gamma$.
% %
% Assume that~$\mp''[\mp/\mp'] = \tuple{\cS,\cM'}$
% and let
% $\cM_1 = \modules(\mp'' - \mp)$
% and~$\cM_2 = \cM \setminus \cM_1$
% be sets of modules.
% Then, 
% $\cI \models \Phi(\mp'') = \exists \vec{h} \ \bigwedge \cM$
% iff
% $\cI \models  \exists \vec{h} \ (\bigwedge \cM_1 \wedge \bigwedge \cM_2)$
% iff
% $\cI \models  \exists \vec{h} \ (\bigwedge \cM_1 \wedge \bigwedge \set{ M[\mp/\mp'] \mid \mp \in \cM_2 })$
% iff
% \mbox{$\cI \models \Phi(\mp''[\mp/\mp'])$}.
% \end{proof}
In our example, although $\MP{hc}$ and $\MP{hc}'$ are not equivalent in general, we can prove:

\begin{proposition}\label{prop:ham.equivalence}
Modules $\MP{hc}$ and~$\MP{hc}'$ are strongly equivalent w.r.t.~$\Gamma=\set{ \vertex(a) }$.
\end{proposition}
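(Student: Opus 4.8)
The plan is to invoke Theorem~\ref{thm:modular.equivalence}, which reduces the claim of strong equivalence w.r.t.\ the context $\Gamma=\set{\vertex(a)}$ to proving that $\set{\vertex(a)} \models \Phi(\MP{hc}) \leftrightarrow \Phi(\MP{hc}')$ holds for every Herbrand interpretation. Both $\Phi(\MP{hc})$ and $\Phi(\MP{hc}')$ are second-order formulas whose only free predicate symbols are the public ones $\vertex$ and $\inp$, since the auxiliary predicates $r$ (local to $\MP{cn}$) and $\mathit{ra}$ (local to $\MP{cn}'$) are existentially quantified away by the recursive definition of $\Phi$. Hence it suffices to fix an arbitrary interpretation $\cI$ with $a\in\vertex^\cI$ and compare the conditions that the two formulas impose on the graph $\tuple{\vertex^\cI,\inp^\cI}$.

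For the left-hand side I would reuse Proposition~\ref{prop:connectivityb} directly: since $\MP{hc}$ is precisely $\mh^{vpq}$ under the renaming $v,p,q\mapsto\vertex,\inp,r$, an interpretation $\cI$ satisfies $\Phi(\MP{hc})$ iff $\inp^\cI$ forms a \hc of $\tuple{\vertex^\cI,\inp^\cI}$, i.e.\ iff its edges can be arranged as a single directed cycle through pairwise distinct vertices covering all of $\vertex^\cI$. This step needs no new argument.

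The bulk of the work is characterizing the right-hand side, for which there is no prior proposition. Unfolding the recursion gives $\Phi(\MP{hc}') = \big(\exists \mathit{ra}\,(\SM_{\mathit{ra}}[\eqref{eq:hcmod1.1a}\wedge\eqref{eq:hcmod3.2b}]\wedge\eqref{eq:hcmod4.2a})\big)\wedge\eqref{eq:hcmod2.1}\wedge\eqref{eq:hcmod2.2}$. Since the rules defining $\mathit{ra}$ use $\mathit{ra}$ only positively, I would compute its stable extension (via the $\SM$ operator, or equivalently Clark's completion) and show that in any model $\mathit{ra}^\cI$ is exactly the set of vertices reachable from $a$ by a directed path of length at least one. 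Constraint~\eqref{eq:hcmod4.2a} then forces $\vertex^\cI\subseteq\mathit{ra}^\cI$, while \eqref{eq:hcmod2.1} and \eqref{eq:hcmod2.2} bound the out- and in-degree by one. The resulting characterization, which I would state as a lemma, is: $\cI\models\Phi(\MP{hc}')$ iff every vertex of $\vertex^\cI$ is reachable from $a$ by a nonempty path and every vertex has out-degree and in-degree at most one. I expect establishing this reachability characterization of $\mathit{ra}$ to be the main obstacle, as it is the only genuinely new $\SM$ computation in the argument.

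It remains to prove, assuming $a\in\vertex^\cI$, the purely graph-theoretic equivalence between ``$\inp^\cI$ is a \hc'' and ``all vertices reachable from $a$ by a nonempty path, with out- and in-degree at most one.'' The forward direction is immediate: a Hamiltonian cycle has out-degree and in-degree exactly one everywhere and, traversing it from $a$, reaches every vertex (including $a$ itself via the full cycle, of length at least one). For the converse, the degree bounds make the graph a disjoint union of simple paths and simple cycles; reachability of $a$ from itself places $a$ on a cycle $C$, and since every vertex of $C$ has out-degree exactly one, its unique out-edge stays inside $C$, so no nonempty path from $a$ can ever leave $C$; as every vertex is reachable from $a$, it follows that $V=C$, i.e.\ the whole graph is a single cycle through all vertices, which is a \hc. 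The delicate point throughout is the ``length at least one'' reading of reachability: it is precisely what excludes, consistently on both sides, the degenerate one-vertex graph with no self-loop, and it is the reason the hypothesis $\vertex(a)$ --- guaranteeing that $a$ is a vertex forced to lie on a cycle --- is exactly what makes the two encodings agree.
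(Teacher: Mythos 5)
Your proof is correct in substance and starts from the same reduction via Theorem~\ref{thm:modular.equivalence}, but it takes a genuinely different route from the paper's. The paper argues at the level of formulas: it first recalls $\MP{hc} \models \eqref{eq:hcmod2.2}$ (disposing of the constraint present only in $\MP{hc}'$), then introduces the bridge formula \eqref{eq:hcmod4.4} --- the transitive-closure module $\Phi(\M{3})$ conjoined with ``every vertex is $r$-reachable from $a$'' --- and proves three entailments: $\vertex(a) \models \Phi(\MP{cn}') \leftrightarrow \eqref{eq:hcmod4.4}$, then $\models \Phi(\MP{cn}) \to \eqref{eq:hcmod4.4}$ (by instantiating $x:=a$ in \eqref{eq:hcmod4}), and finally $\eqref{eq:hcmod2.1} \models \eqref{eq:hcmod4.4} \to \Phi(\MP{cn})$. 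You instead characterize both modules semantically --- $\Phi(\MP{hc})$ as ``$\inp^\cI$ is a \hc'' via Proposition~\ref{prop:connectivityb}, and $\Phi(\MP{hc}')$ by a new lemma computing the stable extension of $\rap$ as reachability from~$a$ --- and then prove a purely graph-theoretic equivalence. The mathematical core is the same in both (the $\rap$ rules define reachability from $a$, and out-degree $\leq 1$ promotes ``everything reachable from $a$'' to strong connectivity, which is exactly the paper's last entailment), but the paper's bridge formula lets it reuse $\Phi(\M{3})$ on both sides and never fully characterize $\MP{hc}'$, whereas your route costs more semantic work and buys a reusable reachability lemma (the analogue of Proposition~\ref{prop:transitive.closure} for $\rap$) plus an explanation of \emph{why} the two modules agree.

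Two caveats. First, your aside that the extension of $\rap$ could ``equivalently'' be obtained by Clark's completion is wrong: the definition \eqref{eq:hcmod1.1a}--\eqref{eq:hcmod3.2b} is recursive in $\rap$, hence not tight, and completion admits spurious models (e.g.\ $\inp^\cI=\set{(b,c),(c,b)}$ with $\rap^\cI=\set{b,c}$, unreachable from $a$); the correct tool is the minimal-model/circumscription argument used for Proposition~\ref{prop:transitive.closure}. Second, quoting Proposition~\ref{prop:connectivityb} ``directly'' drops its standing hypothesis that $\tuple{\vertex^\cI,\inp^\cI}$ is a graph, and your final step (``the whole graph is a single cycle'') likewise needs every $\inp$-edge to join elements of $\vertex^\cI$; otherwise stray edges between non-vertex objects break both of your characterizations. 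The paper's own claim $\MP{hc}\models\eqref{eq:hcmod2.2}$ carries the same implicit restriction, so this does not put you below the paper's standard, but in your write-up the assumption should be made explicit.
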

\begin{proofs}
Recall that
\mbox{$\mp_{\it hc} \models\eqref{eq:hcmod2.2}$}.
The rest of the proof follows two steps.
First, given:
\begin{align}
&\exists r\, \big(\ \Phi(\M{3}) \wedge \forall y (\vertex(a)\wedge\vertex(y)  \rar \rp(a,y)) \ \big)
\label{eq:hcmod4.4}
\end{align}
we get 
\mbox{$\vertex(a) \models \Phi(\MP{cn}') \leftrightarrow \eqref{eq:hcmod4.4} $}
and, furthermore, 
$\models \Phi(\mp_{\it c}) \to \eqref{eq:hcmod4.4}$
follows by instantiation of~$\forall x$ with \mbox{$x=a$}.
Second, we can prove
\mbox{$\eqref{eq:hcmod2.1} \models \eqref{eq:hcmod4.4} \to \Phi(\mp_{\it c})$}.
\end{proofs}

%This result immediately implies that the answer sets of logic program $\fh$ for intensional predicate symbol $\mathit{in}$ hiding all other predicates of the program coincide with the answer sets of the specification~$\MP{1}$ for any graph.

% \begin{proposition}
% Let $\cI$ be some interpretation satisfying 
% the fact $vertex(a)$
% and the formula~$\eqref{eq:hcmod2.1}$.
% Then,
% $\cI$ is a model of~$\mp_{\it c}$
% iff
% $\cI$ is a model of~$\MP{cn}'$.
% \end{proposition}

%%%%%%%%%%%%%%%%%%%%%%%%%%%%%%%%%%%%%%%%%%%%%%%%%%%%%%%%%%%%%%%%%%%%%%%%%%%%%%%%%%%%%%
\section{Conclusions and future work} \label{sec:conc}
%%%%%%%%%%%%%%%%%%%%%%%%%%%%%%%%%%%%%%%%%%%%%%%%%%%%%%%%%%%%%%%%%%%%%%%%%%%%%%%%%%%%%%
We  presented a modular ASP framework that allows nested modules possibly containing hidden local predicates.
The semantics of these programs and their modules is specified via  the second-order SM operator, and so, it does not resort to grounding.
We  illustrated how, under some reasonable conditions, a modular reorganization of a logic program can be used for verifying that it adheres to its (in)formal specification.
This method has two important advantages.
First, it applies a divide-and-conquer strategy, decomposing the correctness proof for the target program into almost self-evident pieces.
Second, it can be used to guarantee correctness of a module replacement, even if interchanged modules are non-ground and use different local predicates.
In this way, correctness proofs are also reusable.
%
%Note that, since verification happens at the level of modules, the formal statements about these modules can be reused when they occur in a different program: module verification is only done once.
%: the reusability of those modules comes accompanied by the reusability of their formal results.
The need for second-order logic is inherent to the expressiveness of first-order stable models but has the disadvantage of lacking a proof theory in the general case. Yet, 
there are well-known specific cases in which the second-order quantifiers can be removed.
This is often the case of the SM operator
so that we can use formal results from the literature (splitting, head-cycle free transformations, relation to Clark's Completion or Circumscription -- see \citep{feleli11a})  to reduce the these second-order formulas to first-order ones.
We also intend to exploit the correspondence between SM and Equilibrium Logic to study  general inter-theory relations~\cite{PV04}.

Our definition of contextual strong equivalence using hidden predicates is a variation of \emph{strong equivalence}~\cite{lipeva01a,lif07a}.
We leave it to future work the relation to other program equivalence and correspondence notions~\cite{eitowo05a,oettom08,OikarinenJ09,AguadoCFPPV19,GeibingerT19}.
%
% Another interesting future direction is establishing a formal connection to the modular approach in~\cite{oikjan06a} and, especially, the removal of local auxiliary predicates through \emph{forgetting}~\cite{goncalves19}.
%
% An important difference with our proposal is that these approaches rely on the program grounding.
%
% \sidecomment{add comment about\\ \cite{oettom08}}
%
Another topic for future work is the extension of automated reasoning tools for ASP verification~\cite{lifschitz18a} to incorporate modularity. 
%It is also interesting to explore the generalization of this framework to richer fragments of ASP that may contain aggregates or numerical constraints.

\paragraph{\bf Acknowledgments.}
We are thankful to Vladimir Lifschitz and the anonymous reviewers for their comments that help us to improve the paper.
This work was partially supported by MINECO, Spain, grant TIC2017-84453-P and NSF, USA grant 1707371.
The second author is funded by the Alexander von Humboldt Foundation.

\begin{comment}

--------------------------------------------------

Obviously, the semantics of a module is given by a SO formula, which at times is complex for providing intuitive arguments about its formulas. Yet, here we are able to construct an intuitive justification 
about  the correspondence of a module and its informal meaning. In this construction, we rely on the relation between minimal models of a FO formula of a module and its formal SO semantics.
%
%In other cases, similar arguments 
%rely on the relation between the completion~\cite{cla78} of a FO formula corresponding to a module and its formal SO semantics.
% (The notion of program's completion was first introduced in.)
%
Note that, given the modular nature of our approach,
the conditions necessary to reduce the SM operator to a simpler formalism are checked in each module rather than on the whole program.
%
In this sense,
we may expect that  modules can often be reduced either to circumscription or completion of FO formulas.

% --------------------------------------
\end{comment}

% In general,
% this equivalence needs to account for the possible use of different auxiliary predicates in the formal specification and the program to be verified~\cite{ETW05,AguadoCFPPV19,GeibingerT19}
% or the fact that input of the program consists exclusively of facts~\cite{eitfin03a,EiterFTW05}.

\bibliographystyle{acmtrans}%
\bibliography{krr,bib,procs}%
\clearpage
\appendix
\section{Formalizing the meanings of \definitions for\\ the \hc problem}
\label{sec:formalization}

In this section we argue about the correctness of statements introduced in Section~\ref{sec:specification}.
We start by reviewing a series of results that are useful in proving such correctness and apply them to our running example.

In this section, it is convenient for us to  identify  {\definition} \mbox{$M=(\vec{p}:F)$} with the formula $\Phi(M) \eqdef \SM_\vec{p}[F]$ that captures the semantics of $M$.

\paragraph{Strong equivalence and denials.}
Non-modular programs $P_1$ and~$P_2$ are {\em strongly equivalent} if for 
every traditional program $P$, programs $P_1\cup P$ and $P_2\cup P$ have the same answer sets~\cite{lipeva01a}.
More in general,
FO formulas $F$ and $G$ are {\em strongly equivalent} if 
for any formula $H$, any occurrence of~$F$ in~$H$, and any list~{\bf p} of 
distinct predicate constants,  SM$_{\bf p}[H]$ is equivalent to 
SM$_{\bf p}[H']$, where~$H'$ is obtained from $H$ by replacing 
$F$ by $G$.
\citeN{lif07a} show that FO 
formulas~$F$ and~$G$ are strongly equivalent if they are equivalent in \sqht logic\,---\,a FO intermediate logic~\cite{PearceV08} between classical and intuitionistic logics.
%
%In some cases
%we can show that formulas~$F$ and~$G$ are equivalent in intuitionistic logic. From that it will follow that they are also equivalent in \sqht logic,

A formula of the form~${\widetilde{\forall}(\body\rar \bot)}$ 
is intuitionistically equivalent to formula $\neg \widetilde{\exists} \body$.
We call formulas of both of these forms {\em denials} and identify the former with the latter.
%
%We now review an important result about properties of denials. 
\begin{theorem}[Theorem 3;~\citeNP{feleli11a}]\label{thm:constraints}
	For any FO formulas $F$ and $G$ and arbitrary tuple~${\bf p}$ of predicate constants, $\SM_{\bf p}[F\wedge\neg G]$ is equivalent to 
	$\SM_{\bf p}[F]\wedge\neg G.$
\end{theorem}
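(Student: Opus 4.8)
The plan is to unfold both sides with the definition of $\SM_{\vec{p}}$ in~\eqref{f:sm} and reduce the whole equivalence to a single structural lemma about the starred translation $F^*$. First I would compute $(F\wedge\neg G)^*$. Reading $\neg G$ as $G\to\bot$ and noting that $\bot$ is atomic without members of $\vec{p}$ (so $\bot^*=\bot$), the implication clause of the definition of $\,\cdot^*$ gives $(\neg G)^* = (G^*\to\bot)\wedge(G\to\bot) = \neg G^*\wedge\neg G$, and hence $(F\wedge\neg G)^* = F^*\wedge\neg G^*\wedge\neg G$. Substituting into~\eqref{f:sm} yields
\[
\SM_{\vec{p}}[F\wedge\neg G] \;=\; (F\wedge\neg G)\wedge\neg\exists\vec{U}\big((\vec{U}<\vec{p})\wedge F^*\wedge\neg G^*\wedge\neg G\big).
\]

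Next I would exploit that $\neg G$ mentions none of the quantified variables $\vec{U}$, so it factors out of the existential: $\exists\vec{U}(\cdots\wedge\neg G) \equiv \neg G\wedge\exists\vec{U}((\vec{U}<\vec{p})\wedge F^*\wedge\neg G^*)$. Pushing the negation through and using that $\neg G$ already sits in the outer conjunction, the clause $\neg[\neg G\wedge X]=G\vee\neg X$ combines with $\neg G$ to give simply $\neg G\wedge\neg X$. The formula therefore simplifies to $F\wedge\neg G\wedge\neg\exists\vec{U}((\vec{U}<\vec{p})\wedge F^*\wedge\neg G^*)$. Comparing this with $\SM_{\vec{p}}[F]\wedge\neg G = F\wedge\neg\exists\vec{U}((\vec{U}<\vec{p})\wedge F^*)\wedge\neg G$, the only remaining gap is the extra conjunct $\neg G^*$ inside the existential.

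To close that gap I would invoke the key lemma: whenever $\vec{U}\le\vec{p}$ holds, $H^*\to H$ for every formula $H$. This is proved by a routine structural induction on $H$ — the base case $p_i(\vec{t})$ uses $U_i(\vec{t})\to p_i(\vec{t})$, which is exactly $\vec{U}\le\vec{p}$; each connective clause of the definition of $\,\cdot^*$ carries along the original $H$ (the implication clause literally conjoins $G\to H$), and the quantifier cases are immediate. Taking $H=G$ and contraposing gives $\neg G\to\neg G^*$ under $\vec{U}\le\vec{p}$. Since $\vec{U}<\vec{p}$ entails $\vec{U}\le\vec{p}$, and $\neg G$ holds as an outer conjunct, the conjunct $\neg G^*$ inside $\exists\vec{U}$ is redundant: any $\vec{U}$ witnessing the body without $\neg G^*$ automatically satisfies $\neg G^*$ as well. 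Hence the two existentials are equivalent in the presence of $\neg G$, and both sides collapse to the same formula.

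The main obstacle I anticipate is not any single manipulation but the bookkeeping of where $\vec{U}$ and $\vec{p}$ occur: the argument hinges on $\neg G$ being $\vec{U}$-free (so it factors through the existential) and on $\neg G^*$ being forced once $\neg G$ and $\vec{U}\le\vec{p}$ both hold. The supporting lemma $H^*\to H$ under $\vec{U}\le\vec{p}$ is the only part requiring induction, and it is precisely where a missing or mis-stated clause of the definition of $\,\cdot^*$ would break the argument; everything else is propositional rearrangement valid in classical logic, which suffices since the theorem asserts classical (second-order) equivalence.
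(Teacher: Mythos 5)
Your proof is correct. Note that the paper itself offers no proof of this statement: it is imported verbatim as Theorem~3 of the cited reference (Ferraris, Lee, and Lifschitz), so there is no in-paper argument to compare against. Your reconstruction — computing $(F\wedge\neg G)^*=F^*\wedge\neg G^*\wedge\neg G$, factoring the $\vec{U}$-free conjunct $\neg G$ out of the existential, and discharging the residual $\neg G^*$ via the monotonicity lemma that $\vec{U}\leq\vec{p}$ entails $H^*\to H$ for every $H$ — is exactly the standard argument used in that original source, where the key lemma appears as a separate result proved by the same structural induction you sketch.
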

Theorem~\ref{thm:constraints} can be understood in the following terms. 
A modular program $\Pi$ containing a \definition of the form \mbox{$(\vec{p}:F\wedge \neg G)$}
is equivalent to the one resulting from $\Pi$ by replacing \mbox{$(\vec{p}:F\wedge \neg G)$} with \definitions  
 \mbox{$(\vec{p}:F)$} and $\neg G$. Thus, any denial semantically translates into a classical first order formula.
Now, \textit{the claims in Statement~\stat{\eqref{eq:hcmod4}} and~\stat{\eqref{eq:hcmod2.1}} immediately follow from Theorem~\ref{thm:constraints}.}
%
% Indeed, 
% \definitions~\eqref{eq:hcmod2.1} and~\eqref{eq:hcmod4} are equivalent to FO  formulas
% $$
% \forall x y z(({\it in}(x,y)\wedge {\it in}(x,z)\wedge \neg(y=z) )\rar\bot) 
% \wedge \forall x y z(({\it in}(x,z)\wedge {\it in}(y,z)\wedge \neg(x=y)) 
% \rar\bot)
% $$
% and
% $$
% \forall x y ((\neg {\it r}(x, y)\wedge {\it vertex}(x)\wedge 
% {\it vertex}(y))\rar \bot),
% $$
% respectively.

%The following corollary immediately follows.
\paragraph{Tightness and completion.}
% considered program. The stated Completion Lemma helps us to argue that the  \observation~\ref{enum:observ3} is correct
% Consider a simple \definition \SM$_{\bf p}[F]$.
Although $\SM$ is defined on arbitrary formulas, we focus now on the traditional syntax of logic program rules, that is, FO sentences of the form~\eqref{eq:rule_fo}.
We say that a module is {\em tight} if its dependency graph  is acyclic.
For example, all modules in program $\Pi_1$, but~\MP{cn} (and those containing it) are tight.

A FO formula $F$ is in {\em  Clark normal form}~\cite{feleli11a} relative to the tuple~{\bf p} of  predicate symbols if it is a conjunction of formulas of the form
\beq 
\forall \vec{x} (G\rar p(\vec{x}))
\eeq{eq:compformula}
one for each predicate $p\in {\bf p}$, where $\vec{x}$ is a tuple of distinct object variables. 
%A formula $F$ is in {\em extended Clark normal form} if it is 
%of the form $G\wedge H$ where  $G$ is a formula in Clark normal form relative to {\bf p} and $H$ is a conjunction of formulas of the form $\widetilde{\forall}(K\ar\bot)$.
We refer the reader to Section 6.1 in~\cite{feleli11a} for the description of  the intuitionistically equivalent transformations that can convert a FO formula that is a conjunction of formulas of the form~\eqref{eq:rule_fo}   into Clark normal form. 
Here, we illustrate results of these conversion steps on formulas stemming from the program $\Pi_1$.
For instance, converting 
formula
\beq 
{\it edge}(a, a')\wedge \;\; \dots \;\; \wedge {\it 
edge}(c, c')
\eeq{eq:edgeformula}
into Clark normal form 
%using the steps described by \cite{feleli11a}
results in the intuitionistically equivalent formula
\begingroup
\setlength{\thinmuskip}{0mu}%(by default it is equal to 3 mu)
\setlength{\medmuskip}{1mu}%(by default it is equal to 4 mu)
\setlength{\thickmuskip}{1mu}%(by default it is equal to 5 mu)
\beq
\forall xy ((x=a\wedge y=a')\vee \dots
\vee
(x=c\wedge y=c'))\rar edge(x,y))
\eeq{eq:edgeclark}
\endgroup
% Since formulas~\eqref{eq:edgeformula} and~\eqref{eq:edgeclark} are strongly equivalent, it follows that 
% Since only intuitionistically equivalent transformations are used in converting formula~\eqref{eq:edgeformula} into~\eqref{eq:edgeclark} it immediately follows that they are strongly equivalent so that (second-order formula)
% \definition~\eqref{eq:hcmod1} is equivalent to \definition
% \beq
% \text{SM}_{{\it edge}}[\forall xy ((x=a\wedge y=a')\vee \dots
% \vee
% (x=c\wedge y=c'))\rar edge(x,y))].
% \eeq{ed:def:hmod1.str}
Similarly, module~$\Pi_E$
%$$\forall x y\big(  {\it edge}(x,y)  \rar {\it vertex}(x)\big) \wedge \forall x y\big(  {\it edge}(x,y)  \rar {\it vertex}(y)\big)$$
is intuitionistically equivalent to the formula 
\beq{\forall z(F_{\mathit{edge}}(z)  \rar {\it vertex}(z))},\eeq{eq:fedge}
where $F_{\mathit{edge}}(z)$ follows
\begin{gather*}
\exists  xy  ({\it edge}(x,y)\wedge z=x) \vee \exists xy( {\it edge}(y,x) \wedge z=x).
\end{gather*}
The FO formula within \definition~\eqref{eq:choicerulefo} is in Clark normal form.

The {\em completion} of a formula  $F$ in Clark normal form relative to predicate symbols~{\bf p}, denoted by $\COMP_{\bf p}[F]$, 
is obtained from $F$ by replacing each conjunctive term of the form~\eqref{eq:compformula}
by 
%\beq 
$$
\forall \vec{x} (G\lrar p(\vec{x})).
$$
For instance, the completion of~\eqref{eq:edgeclark} is
\begingroup
\setlength{\thinmuskip}{0mu}%(by default it is equal to 3 mu)
\setlength{\medmuskip}{1mu}%(by default it is equal to 4 mu)
\setlength{\thickmuskip}{1mu}%(by default it is equal to 5 mu)
\beq
\forall xy ((x=a\wedge y=a')\vee \dots
\vee
(x=c\wedge y=c'))\lrar edge(x,y)),
\eeq{eq:edgecomp}
\endgroup
while the completion of formula~\eqref{eq:fedge}
%${\forall z(F_{\mathit{edge}}(z)  \rar {\it vertex}(z))}$
is 
\beq{\forall z(F_{\mathit{edge}}(z)  \lrar {\it vertex}(z))}.
\eeq{eq:vertexcomp}

The following theorem follows immediately from Theorem 11 in~\cite{feleli11a}.
\begin{theorem}\label{thm:completion}
Let $\SM_{\bf p}[F]$ be a tight \definition.
Then, $\SM_{\bf p}[F]$ and~$\COMP[F]$ are equivalent.
\end{theorem}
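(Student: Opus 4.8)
The plan is to obtain the result directly from Theorem~11 of \citeN{feleli11a}, the first-order lifting of Fages' theorem, which states that for a tight formula in Clark normal form the stable-model operator coincides with completion. Since all the substantive work is already packaged in that theorem, the task reduces to aligning the hypotheses of the present statement with its hypotheses, and the plan proceeds in three short steps.

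First, I would reduce to the case where $F$ is in Clark normal form relative to $\vec p$, which is the setting in which $\COMP[F]$ is even defined. A \definition whose body is a conjunction of rules of the form~\eqref{eq:rule_fo} can be brought to a Clark normal form $F'$ by the intuitionistically-equivalent transformations recalled from Section~6.1 of \citeN{feleli11a} and illustrated above on \eqref{eq:edgeformula}--\eqref{eq:vertexcomp}. Because $F$ and $F'$ are intuitionistically equivalent they are strongly equivalent (\citeN{lif07a}, via \sqht), so $\SM_{\vec p}[F]$ and $\SM_{\vec p}[F']$ are classically equivalent; and by definition $\COMP[F]=\COMP_{\vec p}[F']$. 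Hence it suffices to prove the claim for $F'$, and from here on I assume $F$ itself is in Clark normal form.

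Second, I would verify that ``tight'' as defined in this paper---acyclicity of $DG$, whose edges $(p,q)$ record a rule with $p$ in the head and $q$ in the positive body---coincides with ``tight on $\vec p$'' in the sense required by Theorem~11 of \citeN{feleli11a}, whose predicate dependency graph has an edge from $p$ to $q$ exactly when some implication has a strictly positive occurrence of $p$ in its consequent and of $q$ in its antecedent. For rules of the form~\eqref{eq:rule_fo}, the strictly positive consequent atoms are precisely the head atoms $a_1,\dots,a_k$ and the strictly positive antecedent atoms are precisely the positive-body atoms $a_{k+1},\dots,a_l$, while the negated atoms contribute no positive occurrences. Consequently the two graphs carry the same edges on the vertex set $\vec p$ and have the same cycles, so the two notions of tightness agree.

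With these two alignments in place, Theorem~11 of \citeN{feleli11a} applies verbatim to the tight, Clark-normal-form formula $F$ and yields that $\SM_{\vec p}[F]$ is classically equivalent to $\COMP_{\vec p}[F]=\COMP[F]$, which is exactly the claim of Theorem~\ref{thm:completion}. I expect the only delicate point to be the bookkeeping of the second step: confirming that the ``positive body'' of a rule~\eqref{eq:rule_fo} is exactly the set of strictly-positive antecedent occurrences that drive the dependency graph in \citeN{feleli11a}, so that the present tightness condition is genuinely the one under which their theorem holds. Once that correspondence is pinned down, the conclusion is immediate.
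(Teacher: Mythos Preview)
Your proposal is correct and matches the paper's own approach: the paper simply states that the result ``follows immediately from Theorem~11 in~\cite{feleli11a}.'' Your additional steps---reducing to Clark normal form via strongly-equivalent transformations and checking that the two notions of dependency graph/tightness agree for rules of shape~\eqref{eq:rule_fo}---are exactly the bookkeeping needed to make that citation go through, and they are handled correctly.
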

Note that expression $\COMP[F]$ is a classical first order logic formula.
Since formulas~\eqref{eq:edgeformula} and~\eqref{eq:edgeclark} are strongly equivalent, it follows that \definitions~$\Pi_E$ and~$\SM_{edge}[\eqref{eq:edgeclark}]$ are equivalent, too.
Similarly, \definition~$M_1$ is equivalent to SO formula $\SM_{vertex}[\eqref{eq:fedge}]$.
Thus, by Theorem~\ref{thm:completion}, modules~$M_E$
and~$M_1$ are equivalent to FO formulas~\eqref{eq:edgecomp}
 and~\eqref{eq:vertexcomp}, respectively.
\emph{These facts suffice to support the claims of Statements~\stat{E}
and~\stat{1}}.
Furthermore, also by Theorem~\ref{thm:completion}, \definition~$M_2$ is equivalent to FO formula
$$
\forall x y( (\neg\neg  {\it in}(x,y) \wedge {\it edge}(x,y)) \lrar {\it in}(x,y)),
$$
which, in turn, is equivalent to formula
$$
\forall x y({\it in}(x,y) \rar {\it edge}(x,y)).
$$
\emph{It is easy to see now that the  claim of Statement~\stat{2} holds.}

The following proposition follows immediately from Theorem~\ref{thm:completion} and generalizes the last claim.
\begin{prop}\label{thm:completionchoice}
A tight \definition $SM_{p}[\forall \vec{x} (\neg\neg p(\vec{x})\wedge G\rar p(\vec{x}))]$ is equivalent to
formula
$\forall \vec{x} ( p(\vec{x})\rar G)$.
\end{prop}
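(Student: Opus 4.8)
The plan is to invoke Theorem~\ref{thm:completion} to trade the second-order $\SM$ operator for a classical completion, and then reduce the result to the claimed formula by an elementary classical manipulation.

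First I would observe that the formula $F := \forall \vec{x} (\neg\neg p(\vec{x})\wedge G\rar p(\vec{x}))$ is already in Clark normal form relative to the single predicate~$p$: it is a (one-conjunct) conjunction of a formula of the shape~\eqref{eq:compformula}, taking the antecedent to be $\neg\neg p(\vec{x})\wedge G$. Since the \definition is assumed tight, Theorem~\ref{thm:completion} applies and yields that $\SM_p[F]$ is equivalent to its completion $\COMP_p[F]$, obtained by turning the implication into a biconditional:
\beq
\forall \vec{x} (\neg\neg p(\vec{x})\wedge G\lrar p(\vec{x})).
\eeq{eq:complchoice}

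Next I would simplify~\eqref{eq:complchoice} classically. As $\COMP_p[F]$ is a first-order sentence interpreted classically, I may rewrite $\neg\neg p(\vec{x})$ as $p(\vec{x})$, so the inner formula becomes $p(\vec{x})\wedge G \lrar p(\vec{x})$. Its left-to-right direction holds trivially, so the biconditional reduces to $p(\vec{x}) \rar (p(\vec{x})\wedge G)$, which in turn simplifies to $p(\vec{x})\rar G$. Restoring the universal quantifier, \eqref{eq:complchoice} is classically equivalent to $\forall \vec{x} (p(\vec{x})\rar G)$, which is the asserted equivalence.

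I do not expect a genuine obstacle here: the only substantive ingredient is the tightness hypothesis, which is precisely what licenses the completion step via Theorem~\ref{thm:completion}; the remainder is the classical tautology $(p\wedge G \lrar p) \equiv (p\rar G)$ together with double-negation elimination. This argument is exactly the generalization of the worked instance for \definition~$M_2$ stated just before the proposition, with $p$ in the role of $\mathit{in}$ and $G$ in the role of $\mathit{edge}(x,y)$.
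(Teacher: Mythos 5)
Your proof is correct and follows exactly the route the paper intends: the paper's own justification is that the proposition ``follows immediately from Theorem~\ref{thm:completion}'' as a generalization of the worked simplification of \definition~$M_2$, which is precisely your argument --- note Clark normal form, apply the completion theorem under tightness, then classically reduce $\forall \vec{x}(\neg\neg p(\vec{x})\wedge G \lrar p(\vec{x}))$ to $\forall \vec{x}(p(\vec{x})\rar G)$.
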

In other words, we can always understand a \definition consisting of a choice rule as a reversed implication in FO logic.

\begin{comment}
\begin{theorem}[Theorem~2~\cite{feleli11a}]\label{thm:two}
	Let $F$ be FO sentences, and let ${\bf p,q}$ be disjoint tuples of distinct predicate constants. 
	Then 
	$$
	\SM_{\bf pq}[F\wedge \bigwedge_{q\in\{\bf q\}} {\forall \vec{x} 
		\big( \neg \neg q(\vec x)\rar 
		q(\vec{x})\big)}]\hbox{ is equivalent to } \SM_{\bf p}[F].
	$$
\end{theorem}
\end{comment}

\paragraph{Circumscription and transitive closure.}
The \emph{circumscription operator with the minimized predicates}~$\bf p$ of a FO formula $F$
is denoted by $\CIRC_{\bf p}[F]$~\cite{feleli11a}.
The models of $\CIRC_{\bf p}[F]$ are the models of~$F$ where the extension of the predicates in~${\bf p}$ is minimal given the interpretation of remaining predicates is fixed.
Interestingly,
if $F$ is a conjunction of rules of~\eqref{eq:rule_fo} without negation,
then $\SM_{\bf p}[F]$ and $\CIRC_{\bf p}[F]$ are equivalent.
Proposition~\ref{prop:transitive.closure} follows directly from this observation and allows us to assert that Statement~\stat{3} holds.

Let us now address the proofs of Propositions~\ref{prop:connectivity}-\ref{prop:correct1}.

For a formula $F$ and a symbol $q$ occurring in it by $(F)^q_Q$ we denote an expression constructed from $F$ by substituting symbol $q$  with $Q$.

\begin{proof}[\bf Proof of Proposition~\ref{prop:connectivity}]
Note that
\[
\Phi(\MP{cn}^{vpq})
	\ \ = \ \ \exists q \ \big(\Phi(q :F_{tr}^{qp}) \wedge \Phi(F^v))
	\ \ = \ \ \exists q \ \big(\Phi(q :F_{tr}^{qp}) \wedge F^v)
\]
Let $\cI$ be an interpretation over signature~$p,v$.
Then, $\cI$ is a model of~$\MP{cn}^{vpq}$
iff there is an interpretation~$\cJ$ over signature~$p,q,v$ that agrees on~$\cI$ on~$p,v$ and is a model of~$\big(\Phi(q :F_{tr}^{qp}) \wedge F^v)$.
Take such interpretation $\cJ$, its $q^J$ is the transitive closure of~$p^\cI$ (Proposition~\ref{prop:transitive.closure})
and~$\cJ$ is a model of~$F^v$.
% It clearly follows that $\cI$ is a model of 
%
Note also that $F^{v}$ is equivalent to
\begin{gather}
    \forall x y ( {\it v}(x)\wedge {\it v}(y)\rar {\it q}(x, y)).
    \label{eq:1:proof:prop:connectivity}
\end{gather}
Consequently, any pair of elements in $v^\cI$ are such that they are also in binary relation~$q^\cJ$ with each other.
\emph{Left-to-right.}
Assume that~$\cI$ is a model of~$\MP{cn}^{vpq}$.
Then, any pair of vertices satisfies ${(a,b) \in q^\cJ}$
and, consequently, a path from $a$ to $b$ exists in graph~$\tuple{v^{\cI},p^{\cI}}$.
\mbox{\emph{Right-to-left.}}
Let~$\cI$ be an interpretation of~$\it v$ and~$\it p$ such that there is  a direct path exists connecting all pairs of vertices in~$\tuple{\it v^\cI, p^\cI}$.
Let~$q^\cJ$ be the transitive closure of~$p^\cI$.
From Proposition~\ref{prop:transitive.closure}, this implies that $\cJ$ satisfies~$\Phi(q :F_{tr}^{qp})$.
Take now any two vertices~${a,b\in \it v^\cI}$.
Then, there is a path~$(v_0,v_1),(v_1,v_2), \dotsc, (v_{n-1}, v_n)$ such that $v_0=a$, $v_n=b$ and $(v_{i-1},v_i) \in \it p^\cI$ for all $1 \leq i \leq n$.
Since~$q^\cJ$ is the transitive closure of~$p^\cI$, it follows that $(a,b) \in q^\cJ$ and, thus,~$\cI$ satisfies~$\eqref{eq:1:proof:prop:connectivity}$.
Consequently, $\cI$ is a model of~$\MP{cn}^{vpq}$.
\end{proof}

\begin{proof}[\bf Proof of Proposition~\ref{prop:connectivityb}]
\emph{Left-to-right.}
If $\cI$ is a model of $\mh^{vpq}$, then Proposition~\ref{prop:connectivity} implies that for any pair \mbox{$v_1,v_m \in v^{\cI}$} of vertices, there are directed paths
$(v_1,v_2),\allowbreak(v_2,v_3),\allowbreak(v_3,v_4),\dots,\allowbreak(v_{m-1}, v_m)$
and
$(v_{m+1},v_{m+1}),\dotsc,\allowbreak(v_n,v_1)$
in $\tuple{v^{\cI},p^{\cI}}$.
Hence, there exists:
\begin{gather}
(v_1,v_2),\allowbreak(v_2,v_3),\allowbreak(v_3,v_4),\dots,\allowbreak(v_{m-1},v_m),\allowbreak (v_{m+1},v_{m+1}),\dotsc,\allowbreak(v_n,v_1)
\tag{\ref{eq:path}}
\end{gather}
in graph $\tuple{v^{\cI},p^{\cI}}$
such that  every vertex in $v^{\cI}$ appears in it.
Since $\cI$ is also a model of~$F^p$, Statement~\stat{\eqref{eq:hcmod2.1}} (modulo names of predicate symbols) is applicable.
This implies $v_i \neq v_j$ for all $i \neq j$
and, thus, that all edges in~\eqref{eq:path} are distinct.
Therefore, \eqref{eq:path} is a directed cycle.
Since this cycle  covers all vertices of~$\tuple{v^{\cI},p^{\cI}}$, it is also a Hamiltonian cycle.
\emph{Right-to-Left.}
If the elements of $\mathit{p}^\cI$ can be arranged as a directed cycle $(v_1,v_2),\allowbreak(v_2,v_3),\allowbreak\dots,\allowbreak (v_n,v_1)$ such that $\mathit{v}^\cI=\{v_1,\dotsc,v_n\}$, then it is clear that there is a path between each pair of vertices.
Consequently, $\cI$ is a model of~$\mh^{vpq}$ (Proposition~\ref{prop:connectivity}).
Hence, it only remains to check that~$\cI$ satisfies~$F^p$.
Suppose, for the sake of contradiction, that this is not the case.
Then, there are $(v_i,v_j) \in p^\cI$ and~$(v_i,v_k) \in p^\cI$ such that $v_j \neq v_k$,
which is a contradiction with the assumption that the elements of $\mathit{p}^\cI$ can be arranged as a directed cycle.
\end{proof}

\begin{proof}[\bf Proof of Proposition~\ref{prop:correct1}]
\emph{Left-to-right.}
Assume that interpretation $\cI$ is an answer set of~$\MP{1}(E)$.
Then, there exists an Herbrand interpretation $\cJ$ over signature $\inp,\vertex,\edge$ so that $\cJ$  coincides with~$\cI$ on the extent of $\inp$ and $\cJ$ is also a model of all submodules of $\MP{1}(E)$.
% namely,
% \M{1}, \M{2},\, $\mh$
% and~$M_E$.
%
Thus, $\cJ$  adheres to  Statements~\stat{sg}, \stat{hc} and~\stat{E} about these submodules.
This implies that the extent~$\mathit{in}^\cI$ forms a \hc of~$\tuple{\vertex^{\cJ},\inp^{\cJ}}$ (Statement~\stat{hc}) and that $\tuple{\vertex^{\cJ},\inp^{\cJ}}$ is a subgraph of~$\tuple{\vertex^{\cJ},\edge^{\cJ}}$ (Statement~\stat{sg}).
These two facts imply that~$\mathit{in}^\cI$ forms a \hc of the graph~$\tuple{\vertex^{\cJ},\edge^{\cJ}}$.
Moreover, $\vertex^{\cJ}$ and~$\edge^{\cJ}$ respectively coincide with sets~$V$ and~$E$ (Statements~\stat{2} and~\stat{E}).
Therefore,~$\inp^\cJ$ forms a \hc of the graph~$\tuple{V,E} = G$.
Finally, recall that $\inp^{\cJ}=\inp^{\cI}$, so the result holds.
\textit{Right-to-left}.
Let $H$ be Hamiltonian cycle of a graph edges~$E$.
Let~$\cI$ be an interpretation over signature~$\inp$ such that~$\inp^\cI$ are all edges in~$H$
and let~$\cJ$ be an interpretation over signature $\inp,\vertex,\edge$ so that $\cJ$ coincides with~$\cI$ on the extent of $\inp$.
Assume also that~$\edge^\cJ$ are all edges in~$E$ and~$\vertex^\cJ$ contains exactly all objects occurring in the relation~$\edge^\cJ$.
Clearly~$\cJ$  adheres to Statements~\stat{sg}, \stat{hc} and~\stat{E} about these submodules and, therefore, $\cJ$ is a model of~$\Pi_1(E)$.
\end{proof}

\section{Proofs of Theorems~\ref{thm:sm-traditional2.hidden} and~\ref{thm:modular.equivalence}}
\label{sec:proofs}

\begin{lemma}\label{thm:sm-traditional2b}
For any coherent flat program $\mp$ with $\pred(\mp) = \vec{p}$
and~$\bf h$ the hidden predicates of~$\mp$, and
 any interpretation~$\cI$,
the following conditions are equivalent:
\setlength\leftmargin{12pt}
\begin{itemize}
\setlength\itemsep{0em}
\item $\cI$ is model of $\exists{\bf h}\,$\SM$_\vec{p}[\formula(\mp)]$,
\item $\cI$ is model $\mp$.   
\end{itemize}
\end{lemma}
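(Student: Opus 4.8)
The plan is to reduce the lemma to a splitting property of the $\SM$ operator. First I would unfold the definitions. Since $\mp$ is flat, $\cM = \modules(\mp)$ consists entirely of def-modules $(\vec{p}_i : F_i)$, so by construction
\[
\Phi(\mp) \ = \ \exists \vec{h} \bigwedge_i \SM_{\vec{p}_i}[F_i],
\]
whereas the other formula is $\exists \vec{h}\, \SM_{\vec{p}}[\formula(\mp)] = \exists \vec{h}\, \SM_{\vec{p}}[\bigwedge_i F_i]$, with $\vec{p} = \intens(\mp) = \bigcup_i \vec{p}_i$. In both expressions the tuple $\vec{h}$ of hidden predicates is the same, namely the free predicate symbols of $\formula(\mp)$ not in $\cS$ (the free predicates of each $\Phi(M_i) = \SM_{\vec{p}_i}[F_i]$ are exactly those of $F_i$). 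Hence it suffices to prove the equivalence, over all interpretations and before existential closure,
\[
\SM_{\vec{p}}\Big[\bigwedge_i F_i\Big] \ \equiv \ \bigwedge_i \SM_{\vec{p}_i}[F_i];
\]
applying $\exists \vec{h}$ to both sides then identifies the two conditions of the lemma.

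The core of the argument is an application of the splitting theorem for the general theory of stable models, and coherence supplies exactly its hypotheses. Simplicity ensures that each $F_i$ is a conjunction of rules of the form~\eqref{eq:rule_fo} whose head predicates all lie in $\vec{p}_i$, so each module only ``defines'' its own intensional predicates. Condition~(i) of coherence gives that the $\vec{p}_i$ are pairwise disjoint, so they form a genuine partition of $\vec{p}$. Condition~(ii) guarantees that every strongly connected component of $DG[\mp]$ is contained within a single $\vec{p}_i$, that is, no positive dependency loop crosses a module boundary. I would verify that $DG[\mp]$ as defined in the paper — an edge from each head predicate to each predicate in the positive body of a rule — coincides with the positive-dependency graph on which the splitting theorem imposes its across-parts acyclicity condition; this match is immediate once the rules are in the normalized form~\eqref{eq:rule_fo}.

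With the hypotheses in place, I would carry out the decomposition by induction on the number of modules $n$. The base case $n = 1$ is immediate. For the inductive step I split off $F_1$: using disjointness of $\vec{p}_1$ from $\vec{p} \setminus \vec{p}_1$, the head conditions, and the absence of an SCC crossing the boundary between $\vec{p}_1$ and $\vec{p} \setminus \vec{p}_1$, the splitting theorem yields
\[
\SM_{\vec{p}}\Big[\bigwedge_i F_i\Big] \ \equiv \ \SM_{\vec{p}_1}[F_1] \ \wedge \ \SM_{\vec{p} \setminus \vec{p}_1}\Big[\bigwedge_{i \geq 2} F_i\Big].
\]
The remaining conjunct again satisfies every coherence hypothesis for the subcollection $\{F_2, \dots, F_n\}$ — each SCC still lies inside a single module, hence inside the new part — so the induction hypothesis applies and finishes the decomposition.

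The main obstacle I anticipate is bookkeeping rather than conceptual: faithfully aligning the paper's notions (simplicity, $DG[\mp]$, $\intens(\mp)$, and especially the treatment of a predicate that is intensional in one module but occurs only extensionally, under negation or in a body, in another) with the precise side conditions of the splitting theorem. In particular I would need to check that the non-head and negative occurrences of $\vec{p}_i$-predicates appearing in other modules do not violate the theorem's positivity requirements, and that these occurrences are correctly reflected (or correctly absent) in $DG[\mp]$. Once this correspondence is pinned down, both the inductive decomposition and the final re-attachment of $\exists \vec{h}$ to both sides are routine.
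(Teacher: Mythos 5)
Your proposal is correct and takes essentially the same route as the paper's own proof: an induction on the number of def-modules in which one module is peeled off per step via the splitting theorem of Ferraris et al.~\citep{felelipa09a}, with the coherence conditions (pairwise disjoint intensional tuples, heads confined to each module's own tuple, and SCCs of $DG[\mp]$ contained in a single module) serving precisely as the splitting hypotheses. The only cosmetic differences are that the paper splits off the last module rather than the first, and carries the existential closure $\exists \vec{h}$ through the induction instead of factoring it out at the start as you do.
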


\begin{proof}
Let $\mp = \tuple{\cS,\cM}$  with 
$\cM = \{({\bf p_1} : F_1 ), \dots , 
({\bf p_n} : F_n )\}$
be a flat modular program.
The proof is by induction on $n$. The base case is trivial. In the induction 
step, we assume that
for any coherent modular program with less than $n$ modules the result holds.
By definition it follows that
$$
\exists{\bf h}\,\text{SM}_{\pred(\mp)}\big[ \ \formula(\mp)
\ \big]
\quad = \quad
\exists{\bf h}\, \text{SM}_{\pred(\mp') \cup {\bf p_n}}\Big[ \ \bigwedge_{i = 1}^{n-1} F_i \;\; \land \;\; F_{n}
\ \Big],
$$
where 
$\mp' = \tuple{\cS',\cM \setminus \{ ({\bf p}_n : F_n )\} }$, where~$\cS'$ is the set containing all predicates $\mp'$.
By the Splitting Theorem in~\cite{felelipa09a}
% from \citep{fer09a},
the latter is equivalent to
\beq
\exists{\bf h}\, \Big( \text{SM}_{\pred(\mp')} \Big[ \  \bigwedge_{i = i}^{n-1} F_i \ \Big] \;\; \land \;\; 
\text{SM}_{\bf p_{n}} [F_{n}]  
\Big).
\eeq{eq:splitn+1}
By the definition of $\formula$,
$$
\text{SM}_{\pred(\mp')}\Big[ \  \bigwedge_{i = 1}^{n-1} F_i \ \Big]
\quad = \quad
\text{SM}_{\pred(\mp')}\big[ \  \formula(\mp') \ \big]
$$
and, by induction hypothesis, we get that
$\cI$ is a model of $\text{SM}_{\pred(\mp')}\big[ \  \formula(\mp') \ \big]$
iff
$\cI$ is a model of $\mp'$.
By definition, the latter holds iff
$\cI$ is a model of
$$
\bigwedge_{i=1}^{n-1} \text{SM}_{\vec{p_i}}[ F_i ].
$$
Hence, 
$$
\text{SM}_{\pred(\mp')}\Big[ \  \bigwedge_{i = 1}^{n-1} F_i \ \Big]
\qquad\text{and}\qquad
\bigwedge_{i=1}^{n-1} \text{SM}_{\vec{p_i}}[ F_i ]
$$
are equivalent.
This implies that~\eqref{eq:splitn+1}
is equivalent
\begin{gather}
\exists{\bf h}\, \left( \bigwedge_{i=1}^{n-1} \text{SM}_{\vec{p_i}}[ F_i ] \;\; \land \;\; 
\text{SM}_{\bf p_{n}} [F_{n}] \right),
    \label{eq:splitn+2}
\end{gather}
Finally, by definition, $\cI$ is a model of $\mp$ iff~$\cI$ is a model of~\eqref{eq:splitn+2}.
\end{proof}

\begin{proof}[\bf Proof of Theorem~\ref{thm:sm-traditional2.hidden}]
By definition, 
$\cI$ is model of $\mp=\tuple{\cS,\cM}$
iff $\cI$ is a model of
$$
\Phi(\mp)
	\ \ = \ \ \exists \vec{h} \
\bigwedge_{M \in \cM}
\Phi(M)
$$
%Then, in case that every $M \in \cM$ is a \definition (hence this modular program is flat),
%the result holds directly from Lemma~\ref{thm:sm-traditional2b} above.

Let us now define the degree of a modular program as follows
\begin{gather*}
\mathit{deg}(\mp) \quad \eqdef \quad 
       1 + \max\setm{ \mathit{deg}(M) }{ M \in \cM }
\end{gather*}
with $\mathit{deg}(M) = 0$ for every \definition~$M$.
We now construct a proof by induction on degree of a modular program. 
Let $n$ be a degree of modular program $\Pi$.
We state an induction hypothesis as follows.
For every modular program $\Pi'$ with $\mathit{deg}(\mp') < n$,
an interpretation
$\cI$ is a model of $\mp'$ iff $\cI$ is a model of $\exists\,\vec{h}\text{SM}_\vec{p}[\formula(\mp')]$.

The base case follows from Lemma~\ref{thm:sm-traditional2b}.

Let $\cM_{\it def} = \{ M \in \cM \mid M \text{ is a \definition} \}$ be the set of all \definitions in~$\cM$
and let
$\mp' = \tuple{\cS,\cM'}$
with 
$$
\cM' \quad\eqdef\quad \cM_{\it def} \cup \bigcup_{ \tuple{\cS',\cM''} \in \cM \setminus \cM_{\it def}} \cM''
% \setm{ M }{ \tuple{\cS',\cM'''} \in \cM \setminus \cM' \text{ and } M \in \cM''' }
$$
Obviously, if $\mp$ is not flat, then $\mathit{deg}(\mp') < \mathit{deg}(\mp)$
and, by  induction hypothesis, we obtain that
$\cI$ is model of $\mp'$ iff $\cI$ is a model of $\exists\vec{h}\,\text{SM}_\vec{p}[\formula(\mp')]$.
The latter holds iff $\cI$ is a model of $\exists\vec{h}\,\text{SM}_\vec{p}[\formula(\mp)]$
because
\begin{align*}
\formula(\mp)
    \quad&=\quad
    \bigwedge_{M \in \cM} \formula(M)
    \\
    \quad&=\quad
    \bigwedge_{M \in \cM_{\it def}} \formula(M)
        \wedge
        \bigwedge_{M \in \cM \setminus \cM_{\it def}} \formula(M)
    \\
    \quad&=\quad
    \bigwedge_{M \in \cM_{\it def}} \formula(M)
        \wedge
        \bigwedge_{M \in \cM \setminus \cM_{\it def}}
        \quad
        \Big( \
        \bigwedge_{\substack{M' \in \cM''\\M=\tuple{\cS',\cM''}}} \formula(M')
        \ \Big)
    \\
    \quad&\equiv\quad
    \bigwedge_{M \in \cM_{\it def}} \formula(M)
        \wedge
        %\bigwedge_{M \in \cM \setminus \cM_{\it def}}
        %\quad
        %\bigwedge_{\substack{M' \in \cM''\\M=\tuple{\cS',\cM''}}} \formula(M')
        \bigwedge_{\substack{\tuple{\cS',\cM''}\in  \cM \setminus \cM_{\it def}\\M' \in \cM''}} \formula(M')
    \\
    \quad&=\quad
    \bigwedge_{M \in \cM'} \formula(M)
    \\
    \quad&=\quad
    \formula(\mp').
\end{align*}
Let us show now that
$\cI$ is model of $\mp$ iff $\cI$ is model of $\mp'$.
We can see that, $\cI$ is a model of $\mp$ iff $\cI$ is a model of
\begin{align}
\Phi(\mp)
	\quad &= \quad
	\exists \vec{h} \ \bigwedge_{M \in \cM} \Phi(M)
    \\
    \quad&=\quad
    \exists \vec{h} \Big( \bigwedge_{M \in \cM_{\it def}} \Phi(M)
        \wedge
        \bigwedge_{M \in \cM \setminus \cM_{\it def}} \Phi(M) \Big)
    \\
    \quad&=\quad
    \exists \vec{h} \Big( \bigwedge_{M \in \cM_{\it def}} \Phi(M)
        \wedge
        \bigwedge_{M \in \cM \setminus \cM_{\it def}} \hspace{5pt} 
        \exists \vec{h}' \bigwedge_{\substack{M' \in \cM''\\M = \tuple{\cS',\cM''}}} \Phi(M')
        \Big)
        \label{eq:3:thm:sm-traditional2.hidden}
\end{align}
where $\vec{h}'$ is a tuple containing all predicates occurring in $\cM''$ that are not in $\cS'$.
Since the program is coherent,
then every predicate symbol occurring in two different modular programs or \definitions does not occur in $\vec{h'}$
and, thus, we obtain that~\eqref{eq:3:thm:sm-traditional2.hidden} is equivalent to
\begin{align}
\exists \vec{h}\exists \vec{h}'' \Big( \bigwedge_{M \in \cM_{\it def}} \Phi(M)
        \wedge
        \bigwedge_{\tuple{\cS',\cM''} \in \cM \setminus \cM_{\it def}} \hspace{5pt}  \bigwedge_{M \in \cM'' } \Phi(M)
        \Big)
        \label{eq:4:thm:sm-traditional2.hidden}
\end{align}
where $\vec{h}''$ is a tuple containing all predicates in $\vec{h}'$ for all $\mp' \in \cM \setminus \cM_{\it def}$.
Since the program is coherent,
we also get that every predicate symbol occurring in $\vec{h}''$ also occurs in $\vec{h}$,
and, thus,~\eqref{eq:4:thm:sm-traditional2.hidden}
is equivalent to
\begin{align*}
\quad
    \exists \vec{h} \Big( \bigwedge_{M \in \cM_{\it def}} \Phi(M)
        \wedge
        \bigwedge_{\tuple{\cS',\cM''} \in \cM \setminus \cM_{\it def}} \hspace{5pt}  \bigwedge_{M \in \cM''} \Phi(M)
        \Big),
\end{align*}
which in turn is $\Phi(\mp')$.
This concludes the argument that 
$\cI$ is a model of $\mp$
iff
$\cI$ is a model of $\mp'$.
By arguments at the beginning of the proof it follows that
$\cI$ is a model of $\mp'$ iff
$\cI$ is model of $\exists\vec{h}\,\text{SM}_\vec{p}[\formula(\mp')]$
iff
$\cI$ is model of $\exists\vec{h}\,\text{SM}_\vec{p}[\formula(\mp)]$.
This concludes the proof of (i) point of the theorem.  
The (ii) point follows from Proposition~\ref{thm:sm-traditional}.
\end{proof}

% \begin{theorem}
% \label{thm:modular.equivalence}
% Two modular programs~$\mp$ and~$\mp'$ are strongly equivalent under context~$\Gamma$ iff
% \mbox{$\Gamma \models \Phi(\mp) \leftrightarrow \Phi(\mp')$} holds in SO logic.
% \end{theorem}

\begin{proof}[\bf Proof of Theorem~\ref{thm:modular.equivalence}]
The \emph{left-to-right} direction is immediate.
For the \emph{right-to-left} direction, assume that~$\Pi$ and~$\Pi'$ are strongly equivalent under context~$\Gamma$
and suppose, for the sake of contradiction, that there is a Herbrand  model~$\cI$ of~$\Gamma$
that does not satisfy~$\Phi(\mp) \leftrightarrow \Phi(\mp')$.
Assume, without loss of generality, that
$\cI$ is a model of~$\Gamma$ and~$\Phi(\mp)$ but not of~$\Phi(\mp')$.
Let~$F$ be the conjunction of all formulas in~$\Gamma$
and let~$\Pi_1 = \tuple{\cS,\cM_1}$ and $\Pi_2 = \tuple{\cS,\cM_2}$,
where~$\cS$ the set of all predicate symbols in the signature,
$\cM_1 = \{ F,\, \Pi \}$
and~$\cM_2 = \{ F,\, \Pi' \}$.
Then,
$\cI$ is a model of~$\Phi(\Pi_1) = F \wedge \Phi(\Pi)$
but not of $\Phi(\Pi_2) = F \wedge \Phi(\Pi')$.
Since $\Pi$ and~$\Pi'$ are strongly equivalent under context~$\Gamma$ and obviously~$F \models \Gamma$,
we get that $\Pi_1$ and~$\Pi_2$
have the same answer sets, so they have the same Herbrand models, which is a contradiction with the assumption.
\end{proof}

\end{document}